\pgfplotsset{compat=newest}
\newlength\fheight 
\newlength\fwidth 
\theoremstyle{plain}
\newtheorem{theorem}{Theorem}[section]
\newtheorem{definition}{Definition}[section]
\newtheorem{lemma}{Lemma}[section]
\newtheorem{assum}{Assumption}[section]
\newtheorem{remark}{Remark}[section]
\theoremstyle{definition}
\newcommand{\Rset}{\mathbb{R}}
\newcommand{\BB}{{\mathcal{B}}}
\newcommand{\CC}{{\mathcal{C}}}
\newcommand{\DD}{{\mathcal{D}}}
\newcommand{\EE}{{\mathcal{E}}}
\newcommand{\GG}{{\mathcal{G}}}
\newcommand{\HH}{{\mathcal{H}}}
\newcommand{\II}{{\mathcal{I}}}
\newcommand{\JJ}{{\mathcal{J}}}
\newcommand{\LL}{{\mathcal{L}}}
\newcommand{\MM}{{\mathcal{M}}}
\newcommand{\NN}{{\mathcal{N}}}
\newcommand{\PP}{{\mathcal{P}}}
\newcommand{\QQ}{{\mathcal{Q}}}
\newcommand{\RR}{{\mathcal{R}}}
\newcommand{\UU}{{\mathcal{U}}}
\newcommand{\VV}{{\mathcal{V}}}
\newcommand{\XX}{{\mathcal{X}}}
\newcommand{\YY}{{\mathcal{Y}}}
\newcommand{\ZZ}{{\mathcal{Z}}}
\newcommand{\imply}{\Rightarrow}                   
\newcommand{\mbf}[1]{\mathbf{#1}}                  
\newcommand{\subss}[2]{{#1}_{[#2]}}
\begin{document}

\title{Consensus-Based Current Sharing and Voltage Balancing in DC Microgrids with Exponential Loads}

\author{Pulkit Nahata, {Mustafa S. Turan}, and   Giancarlo Ferrari-Trecate%

\thanks{ Pulkit Nahata,  Mustafa S. Turan, and Giancarlo Ferrari-Trecate are with the Dependable Control and Decision group (DECODE) of the Automatic Control Laboratory, \'Ecole Polytechnique F\'ed\'erale de Lausanne, Lausanne, Switzerland.  Email addresses:    \texttt{\{pulkit.nahata, ~mustafa.turan, ~giancarlo.ferraritrecate\}@epfl.ch}}
\thanks{This work has received support from  the Swiss National Science Foundation under the COFLEX project (grant number 200021{\_}169906).}}
\maketitle
\begin{abstract}
In this work, we present a novel consensus-based secondary control scheme for current sharing and voltage balancing in DC microgrids, composed of distributed generation units, dynamic RLC lines, and nonlinear ZIE (constant impedance, constant current, and exponential) loads. Situated atop a primary voltage control layer, our secondary controllers have a distributed structure, and utilize information exchanged over a communication network to compute necessary control actions. Besides showing that the desired objectives are always attained in steady state, we deduce sufficient conditions for the existence and uniqueness of an equilibrium point for constant power loads --- E loads with zero exponent.  Our control design hinges only on the local parameters of the generation units, facilitating plug-and-play operations. We provide a voltage stability analysis, and illustrate the performance and robustness of our designs via simulations. All results hold for arbitrary, albeit connected, microgrid and communication network topologies.
\end{abstract}
 \section{Introduction}
 \label{sec:introduction}
Thrust by the growing need to leverage the benefits of renewable energy sources, to rein in climate change and electricity costs, and to guarantee safe and reliable supply to areas lacking electric infrastructure, power generation is becoming increasingly distributed. Central to this shift in the operational exemplar are microgrids (mGs), commonly recognized as small-scale electric networks integrating multitude of distributed generation units (DGUs), storage devices, and loads. Microgrids, compatible with both AC and DC operating standards, have been demonstrated to offer manifold advantages like enhanced power quality, reduced transmission losses, and capability to operate in grid-connected and islanded modes \cite{Bhaskara}. In particular, nowadays, DC microgrids (DCmGs) are gaining ground. Their mounting popularity can be ascribed to continuous advancements in power electronics, improvements in computational power of real time controllers, availability of  inherently DC electronic loads (various appliances, LEDs, electric vehicles, computers, etc.), and presence of a natural interface with renewable energy sources (for instance PV modules) and batteries\cite{Meng}. As reviewed in \cite{KUMAR}, DCmGs are now an economically viable solution for many types of residential and industrial applications such as data centers, telecom stations, fast Electrical Vehicles (EV), net-zero energy buildings, electric ships, and hybrid energy storage systems.

In Islanded DCmGs, maintaining voltage stability is crucial, for without it voltages may either breach a critical level or drop suddenly, damaging connected loads \cite{Meng}. To this aim, a primary voltage control layer is often employed for tracking desired voltage references at the point of coupling (PC). Several primary control approaches for Buck converter--interfaced low voltage DCmGs, for example, droop-based control \cite{Shafiee2014} and plug-and-play control \cite{Martinelli2018, strehle2020scalable, Nahata, Tucci2016independent}, have been proposed in the literature. Besides voltage stability, another desirable objective is current sharing, that is, DGUs must share mG loads in accordance with their current ratings. Indeed, unregulated currents may otherwise overload generators and eventually lead to an mG failure. An additional goal of voltage balancing, requiring boundedness of weighted sum of PC voltages, is often sought to complement current sharing \cite{Tucci2018}. Being blind voltage reference emulators, primary controllers are unable to attain the aforementioned objectives all by themselves. Higher-level secondary control architectures \cite{iovine2019, LaBella} are, therefore, necessary to coordinate the voltage references provided to the primary layers. 
 
Distributed, consensus-based secondary regulators guaranteeing current sharing and voltage balancing have been the subject of many recent contributions. Centralized approaches to their synthesis are proposed in \cite{Nasirian, shafiee2014distributed_b}, but are prohibitive for large-scale mGs as they require knowledge of mG topology, lines, loads, and DGUs. Indeed, temporally varying multi-node DCmGs call for scalable design procedures \cite{Nahata, Tucci2016independent}, which enable the synthesis of decentralized controllers and plug -in/-out of DGUs on the fly without spoiling the overall stability of the network. Scalable consensus-based secondary controllers discussed in \cite{Tucci2018, Zhao} remedy the limitations of centralized design schemes while allowing for DCmGs with generic topologies; but introduce a time-scale separation by abstracting primary-controlled DGUs as ideal voltage generators or first-order systems. Moreover, they work only with static power lines. Efforts to incorporate DGU dynamics and RL lines have been made in \cite{Trip, Cucuzzella2018robust}. In \cite{Cucuzzella2018robust}, a robust distributed control algorithm considering both objectives is studied; however, a suitable initialization of the controller is needed. The resistance of the DGU filter is neglected in \cite{Trip} and hence, voltage balancing cannot be guaranteed in steady state. Unlike \cite{Tucci2018,Zhao,Trip,Cucuzzella2018robust} limited to linear loads,  \cite{DePersis} presents a power consensus algorithm intended for DCmGs feeding ZIP (constant impedance, constant current, and constant impedance) loads; although DCmG dynamics are simplified, and assumptions on the existence of a suitable steady state made.
 
All the foregoing contributions exclude E (exponential) loads --- generalized static loads which cover a wide variety of physical loads like industrial motors, fluorescent lighting, pumps, fans, etc., depending upon their exponent \cite{Kundur, Romero}. We highlight that, in DCmGs catering to E loads, steady-state current sharing and voltage balancing need to be backed by certificate guarantees. This is due to the fact that these loads, inherently nonlinear in nature, may jeopardize the stability of the DCmG, for they may introduce a destabilizing negative impedance into the network; see Section \ref{sec:stability}.

 \subsection{Paper Contributions}
 
 In this paper, we build upon previous theoretical contributions on primary voltage control \cite{Nahata}, and introduce a distributed secondary control layer for proportional current sharing and weighted voltage balancing in DCmGs consisting of DGUs, loads, and interconnecting power lines. 
 
 The main technical novelties of this paper are five-fold. First, this work does away with the modeling limitations of several existing contributions. In addition to RLC lines, we consider DGU dynamics and filter resistances. Our Buck converter--interfaced DGUs are modeled after the linear, averaged state-space model \cite{Buckmodel}. On the load modeling front, we take into account nonlinear E loads, which are popularly referred to as generalized ZIP loads, and whose power consumption depends on the exponent of the PC voltage. From what we know, this work is the very first treatise of E loads in the context of DCmGs. Second, we propose a new consensus-based secondary control scheme relying on the exchange of variables with nearest communication neighbors over a connected communication network. To achieve current sharing and voltage balancing, these secondary regulators operate at the same time scale as the primary controllers while appropriately modifying primary voltage references. In spite of their distributed structure, their control design is completely decentralized, allowing for plug-and-play operations. Third, we thoroughly investigate the steady-state behavior of the DCmG under secondary control, and show that the desired goals are always attained in steady state. Since the steady-state regime is governed by the physics of the DCmG, our specific controller has no bearing on the existence of equilibria. Moreover, for the specific case of P loads --- E loads with zero exponent, we deduce sufficient conditions on the existence and uniqueness of an equilibrium point meeting secondary goals. {Such an analysis is not trivial due to the introduced nonlinearities, and entails finding solutions to DC power-flow equations constrained to a hyperplane.} To the best of our knowledge, this has not been addressed in the literature before.   Fourth, we present a voltage stability analysis of the closed-loop DCmG, which shows that stability is independent of DCmG and communication topologies, and lays out conditions on the controller gains and power consumption of E loads. Finally, to substantiate the efficacy of our controllers, we conduct realistic simulations accommodating non-ideal DGUs with nonlinear switching behavior, and abrupt load variations.
 
 The remainder of Section \ref{sec:introduction} introduces relevant preliminaries and notation. Section \ref{sec:model} recaps the DCmG model and primary voltage control. Section \ref{sec:secondary} sets forth our secondary control scheme, and details the steady-state behavior of the closed-loop DCmG in the presence of ZIE loads. Section \ref{sec:stability} houses a stability analysis, which proves the convergence to an equilibrium point simultaneously fulfilling both current sharing and voltage balancing objectives.  Simulations validating theoretical results are provided in Section \ref{sec:simulations}.  Finally, conclusions are drawn in Section \ref{sec:conclusions}. 
 
In \cite{nahata2020existence}, a preliminary version of this work, (i) only ZIP loads were dealt with, (ii) detailed steady-state analysis was not provided, (iii) stability results and proofs, including LaSalle's analysis were skipped, and (iv) elaborate simulations with non-ideal DGUs were not conducted.

 \subsection{Preliminaries and notation}
 \textit{Sets, vectors, and functions:} We let $\mathbb{R}$ (resp. $\mathbb{R}_{>0}$) denote the set of real (resp. strictly positive real) numbers. For a finite set $\mathcal{V}$, let $|\mathcal{V}|$ denote its cardinality. Given $ x \in \mathbb{R}^{n}$, $[x] \in \mathbb{R}^{n \times n}$ is the associated diagonal matrix with $x$ on the diagonal. For vectors $x,y \in \mathbb{R^N}$, the term $x^y$ represents a vector whose $i^{th}$ element is $x_i^{y_i}$. The inequality $x\leq y$ is component-wise, that is, $x_i\leq y_i,~\forall i\in 1,...,n$.  Throughout, $\textbf{1}_n$ and $\textbf{0}_n$ are the $n$-dimensional vectors of unit and zero entries, and $\mbf{0}$ a matrix of all zeros of appropriate dimensions. 
 The average of a vector $v\in\mathbb{R}^n$ is $\langle v\rangle=\frac{1}{n}\sum_{i=1}^n v_i$. We denote with $H^1$ the subspace composed by all vectors with zero average  i.e. $H^1 = \{v\in\mathbb{R}^n:\langle v\rangle = 0\}$. The space orthogonal to $H^1$ is $H_{\perp}^1$. It holds $H_{\perp}^1 =\{\alpha\mathbf{1}_n,\text{ }\alpha\in\mathbb{R}\}$ with dim$(H_{\perp}^1)=1$. 
 
 Consider the matrix $A\in\mathbb{R}^{m\times n}$, and let $A^\dagger\in \mathbb{R}^{n \times m}$ denote its pseudo inverse. With $A(\XX|\YY)$ we indicate the linear map $A:\XX\rightarrow \YY$ where $\XX$ and $\YY$ are subspaces of $\mathbb{R}^n$ and $\mathbb{R}^m$, respectively. The range and null spaces of matrix $A$ are denoted by $\mathcal{R}(A)$ and $\mathcal{N}(A)$, respectively. For a symmetric matrix $A$, the notation $A \succ 0$ (resp. $A \succeq 0$ ) represents a positive definite (resp. positive semidefinite) matrix. 
 
 \textit{Algebraic graph theory}: We denote by $\mathcal{G}(\mathcal{V},\mathcal{E},{W})$ an undirected graph, where $\mathcal{V}$ is the node set and $\mathcal{E}=(\mathcal{V}\times\mathcal{V})$ is the edge set. If a number $l \in \{1,...,|\mathcal{E}|\}$ and an arbitrary direction are assigned to each edge, the incidence matrix $B \in \mathbb{R}^{|\mathcal{V}|\times|\mathcal{E}|}$ has non-zero components: $B_{il} = 1$ if node $i$ is the source node of edge \textit{l}, and $B_{il} =-1$ if node $j$ is the sink node of edge $l$. The \textit{Kirchoff's Current Law} (KCL) can be represented as $x = B\xi$, where $x \in \mathbb{R}^{|\mathcal{V}|}$ and $\xi \in \mathbb{R}^{|\mathcal{E}|}$ respectively represent the nodal injections and edge flows. Assume that the edge $l \in \{1,...,|\mathcal{E}|\}$ is oriented from $i$ to $j$, then for any vector $V \in \mathbb{R}^{|\mathcal{V}|}$, $(B^TV)_l=V_i-V_j$. The Laplacian matrix $\mathcal{L}$ of graph $\GG$ is $\LL=BWB^T$. If the graph is connected, then $\NN(B^T)=H_{\perp}^1$ \cite{Bullo}.
 
\section{DCmG model and primary voltage control}
\label{sec:model}
In this section, we  start by reviewing our DCmG model \cite{Nahata, Tucci2016independent} comprising multiple DGUs interconnected with each other via power lines, and recall the concepts of primary voltage control.  

\textit{DCmG Model:} {The DCmG is modeled as an undirected connected graph $\mathcal{G}_{e}=(\mathcal{D},\mathcal{E})$, where $\DD=\{1,\dots,N\}$ is the node set and $\mathcal{E}\subseteq\mathcal{D}\times\mathcal{D}$ the edge set. To each node also referred to as PC is connected a DGU and a load. The interconnecting power lines are represented by the edges of $\mathcal{G}_e$. On assigning a number to each line, one can equivalently express $\mathcal{E} = \{1,\dots,M\}$ with $M$ denoting the total number of lines. Note that edge directions are arbitrarily assigned, and provide a reference system for positive currents.  We refer the reader to Figure \ref{fig:powernework} for a representative diagram of the DCmG.}      

\begin{figure}[h!]
	\centering
	\ctikzset{bipoles/length=1.2cm}
	\tikzstyle{every node}=[font=\tiny]
	\vspace{-0.1cm}
	\begin{tikzpicture}[scale=0.5]
		\tikzstyle{DGU} = [circle, draw, double, align=center, fill=red!20, draw=red]
		
		\draw (-1,1) node(D1) [DGU]  {\textbf{DGU 1} \\ \textbf{Load 1}};
		\draw (8,3) node(D2) [DGU]  {\textbf{DGU 2}\\  \textbf{Load 2}};
		\draw (3,-3) node(D3) [DGU]  {\textbf{DGU 3} \\ \textbf{Load 3}};
		\draw (3,5)  node(D4) [DGU]  {\textbf{DGU 4} \\ \textbf{Load 4}};
		\draw (8,-3)node(D5) [DGU]  {\textbf{DGU 5} \\ \textbf{Load 5}};
		
		\tikzstyle{every node}=[font=\tiny]		
		\path [-latex,thick] (D1.north east) edge [ bend left=10] node[below, sloped] {$I_1$} (D4.south west);	
		\path [-latex,thick] (D4) edge [ bend left=10] node[below, sloped] {$I_2$} (D2.north west);	
		\path [-latex,thick] (D2.south) edge [ bend left=10] node[below, sloped] {$I_3$} (D5.north);	
		\path [-latex,thick] (D5.west) edge [ bend left=10] node[below, sloped] {$I_4$} (D3.east);	
		\path [-latex,thick] (D3.north) edge [ bend left=10] node[below, sloped] {$I_5$} (D4.south);	
		\path [-latex,thick] (D1.south east) edge [ bend right=10] node[below, sloped] {$I_6$} (D3.north west);	
		\path[dashed, very thick, blue] (D1) edge [ bend left=45] (D4);	
		\path[dashed, very thick, blue] (D1) edge [ bend right=45] (D3);	
		\path[dashed, very thick, blue] (D4) edge  (D5);	
		\path[dashed, very thick, blue] (D5.east) [ bend right=30] edge  (D2.east);	
		
	\end{tikzpicture}
	\caption{A representative diagram of the DCmG  with the communication network appearing in dashed blue.}
	\label{fig:powernework}
\end{figure}
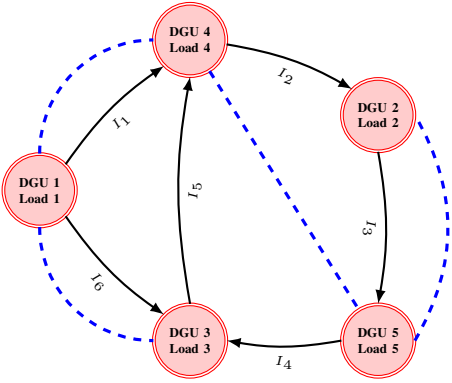
\textit{Dynamic model of a power line:} Modeled after the $\pi$-equivalent model of transmission lines \cite{Kundur}, the dynamic behavior of $l^{th}$ power line is given by
\begin{equation}
\begin{small}
\label{eq:powerline}
\subss{{\Sigma}}{l}^{Line}:\left\{\dfrac{dI_l}{dt} = - \dfrac{R_l}{L_l}I_l + \dfrac{1}{L_l}\sum_{i\in\NN_l}B_{il}V_i\right., \\
\end{small}
\end{equation} 
where $\NN_l$ is the set of DGUs incident to the $l^{th}$ line , and the variables $V_i$ and $I_l$ represent the voltage at $PC_i$ and the line current, respectively. Note that  the line capacitances are assumed to be lumped with the DGU filter capacitance $C_{ti}$. Therefore, as shown in Figure \ref{fig:ctrl_complete}, the RLC power line $l$ is equivalently represented as a $RL$ circuit with resistance $R_l>0$ and inductance $L_l>0$. 
  %
 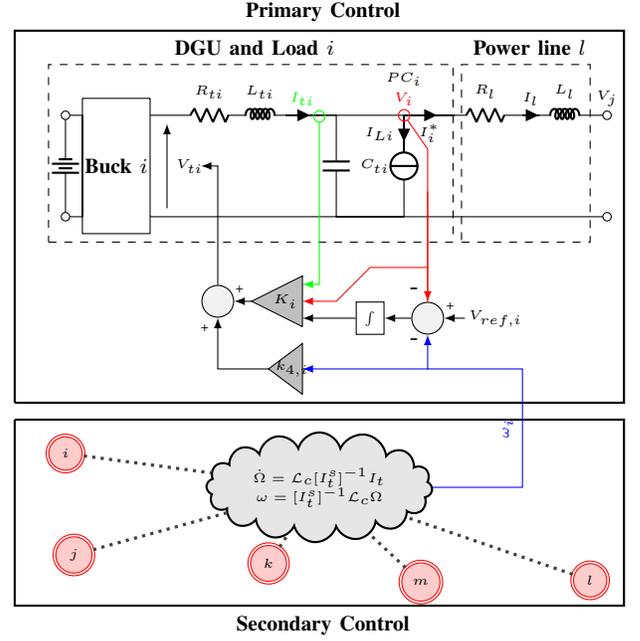
\begin{figure}[!h]
 	\centering
 	\ctikzset{bipoles/length=0.6cm}
 	\tikzstyle{every node}=[font=\tiny]
 	\vspace{0.1cm}
 	\begin{tikzpicture}[scale=0.45]
 		\draw (1,1)  to [battery, o-o](1,4)
 		to [short](1.5,4)
 		to [short](1.5,4.5)
 		to [short](3.5,4.5)
 		to [short](3.5,0.5)
 		to [short](1.5,0.5)
 		to [short](1.5,4)
 		to [short](1.5,1)
 		to [short](1,1);
 		\node at (2.5,2.5){ \footnotesize \textbf{Buck $i$}};
 		\draw[-latex] (4,1.25) -- (4,3.75)node[midway,right]{$V_{ti}$};
 		\draw (3.5,4) to [short](4,4)
 		to [short](4.5,4)
 		to [R=$R_{ti}$] (6,4)
 		to [L=$L_{ti}$] (7.5,4)
 		to [short, i=$\textcolor{green}{I _{ti}}$, -] (8.5,4)
 		to [short](9,4) 
 		to [C, l=$C_{ti}$, -] (9,1)
 		to [short](4,1)
 		to [short](3.5,1);
 		\draw (12.3,4)  to [R=$R_{l}$] (14.5,4) to [short, i=${I _{l}}$](15,4)
 		to [L=$L_{l}$] (16.5,4)
 		to [short, -o] (17,4) node[anchor=north,above]{$V_j$};
 		\draw (8.5,4) to (11,4) 
 		to [ I ] (11 ,1)
 		to [short] (9,1)
 		to [short, -o] (17,1); 
 		\draw (11,4) to [short](11.5,4);
 		\draw (11,4) node[anchor=north, above]{$\textcolor{red}{V_i}$}  to [short, i_=$I_{Li}$](11,2.9);
 		\draw (11,4) to [short, i_=$I^*_{i}$](12.5,4);
 		\node at (11,4.6)[anchor=north, above]{$PC_i$} ;
 		\draw (11,4) to (12.3,4); 
 		\draw[black, dashed] (.5,.25) -- (12.45,.25) -- (12.45,5.5) -- (.5,5.5)node[sloped, midway, above]{\footnotesize{ \textbf{DGU and Load $i$ }}}  -- (.5,.25);
 		\draw[black, dashed] (12.7,.25) -- (16.5,.25) -- (16.5,5.5) -- (12.7,5.5)node[sloped, midway, above]{\footnotesize{ \textbf{Power line $l$}}}  -- (12.7,.25);
 		\draw[red,o-] (10.9,4.15) -- (11.7,3) to (11.7,-1.5);
 		\draw[red,latex-](8,-1.5)-- (9,-1.5) --  (10,-0.5)-- (11.7,-0.5);
 		\draw[green,o-latex] (8.5,4.15) to (8.5,1.75) -- (8.5,-1) to (8,-1);
 		\draw (10,-2) node(a) [black, draw,fill=white!20] {$\normalsize{\int}$};
 		\draw[-latex] (a.west) to (8,-2);
 		\draw (11.7,-2) node(b)[ circle, draw=black, minimum size=12pt, fill=lightgray!20]{};
 		\draw[red, -latex] (11.7,1.75)  -- (b.north) node[pos=0.9, left]{\textcolor{black}{\normalsize{-}}};
 		\draw[latex-] (a.east) -- (b.west);
 		\draw[-latex] (12.8,-2)  -- (b.east) node[pos=0.7,above]{{+}} node[pos=0.25,right]{$V_{ref,i}$};
 		\draw[fill=lightgray] (8,-2.25) -- (8,-0.75) -- (6.5,-1.5) -- (8,-2.25);
 		\node at (7.5,-1.5) {$K_i$};
 		\draw[fill=lightgray] (8,-4.25) -- (8,-2.75) -- (7,-3.5) -- (8,-4.25);
 		\node at (7.7,-3.5) {$k_{4,i}$};
 		
 		\draw (5.5,-1.5) node(c)[ circle, draw=black, minimum size=12pt, fill=lightgray!20]{};
 		\draw[-latex] (6.5,-1.5) -- (c) node[pos=0.7,above]{{+}};
 		\draw[-latex] (c.north)-- (5.5,1.5) -- (5.5,2.5) -- (5,2.5);
 		\draw[-latex] (7,-3.5) -- (5.5,-3.5) --(c)node[pos=0.8,left]{\textcolor{black}{+}};
 		\draw[black, thick] (-0.5,-4.5) -- (17.5,-4.5) -- (17.5,6.5) -- (-0.5,6.5)node[sloped, midway, above]{\footnotesize{ \textbf{Primary Control}}}  -- (-0.5,-4.5);			
 		
 		\draw[black, thick] (-0.5,-5) -- (17.5,-5) -- (17.5,-10.5) --node[sloped, midway, below]{\footnotesize{ \textbf{Secondary Control}}} (-0.5,-10.5)  -- (-0.5,-5);

 		\draw[blue, -latex] (11.5,-7)  -- (14.5,-7)   -- node[sloped, midway,above]{$\omega_i$} (14.5,-3.5) -- (11.7,-3.5) -- (b.south) node[pos=0.8,left]{\textcolor{black}{\normalsize{-}}};	
 		\draw[blue, -latex] (11.7,-3.5)-- (8,-3.5);	
 		
 		\node[cloud, cloud puffs=16.2, cloud ignores aspect, minimum width=3cm, minimum height=1cm, opacity=0.75, align=center, fill=gray!20, thick, draw=black,text=black] (Gcloud) at (8.5, -7) {$\dot {\Omega} =\mathcal{L}_c[I^s_{t}]^{-1}I_t$ \\ $\omega=[I^s_{t}]^{-1}\mathcal{L}_c\Omega$};
 		
 		\tikzstyle{DGU} = [circle, draw, double, fill=red!20, draw=red]
 		
 		\draw (1,-6) node(D1) [DGU]  {\tiny $i$};
 		\draw (1.25,-9) node(D3) [DGU]  {\tiny $j$};
 		\draw (7,-9.25) node(D2) [DGU]  {\tiny $k$};
 		\draw (16.5,-9.75) node(D4) [DGU]  {\tiny $l$};
 		\draw (11.5,-9.8) node(D5) [DGU]  {\tiny $m$};
 		
 		\draw[darkgray,very thick, dotted] (D1) to  (Gcloud);
 		\draw[darkgray,very thick, dotted] (D2) to  (Gcloud);
 		\draw[darkgray,very thick, dotted] (D3) to  (Gcloud);
 		\draw[darkgray,very thick, dotted] (D4) to  (Gcloud);
 		\draw[darkgray,very thick, dotted] (D5) to  (Gcloud);
 	\end{tikzpicture}
 	\caption{Schematic diagram showing primary and secondary control layers of the DCmG, as well as the electric scheme of $i^{th}$ DGU and load. Note that the topology of the communication network is not shown. }
 	\label{fig:ctrl_complete}
 	\vspace{-.3cm}
 \end{figure}

 \textit{Dynamic model of a DGU:} The DGU comprises a DC voltage source (usually generated by a renewable resource),  a Buck converter, and a series $RLC$ filter.  The $i^{th}$  DGU, feeding a local load at $PC_i$, is connected to other DGUs via power lines.  A schematic electric diagram of the $i^{th}$ DGU along with load, connecting line(s), loads, and local PnP voltage controller is represented in Figure \ref{fig:ctrl_complete}. Upon applying KCL and KVL on the DGU side left to $PC_i$, we obtain
 \begin{equation}
 \label{eq:DGUdynamics}
 \subss{{\Sigma}}{i}^{DGU}:
 \left\{
 \begin{aligned}
 C_{ti}\dfrac{dV_{i}}{dt} &= I_{ti}-{I_{Li}(V_i,r_i)} -\sum_{l \in \EE}B_{il}I_{l}\\
 L_{ti}\dfrac{dI_{ti}}{dt} &= -V_{i}-{R_{ti}}I_{ti}+V_{ti}\\
 \end{aligned}
 \right. ,~ i\in \DD,
 \end{equation}
 where $V_{ti}$ is the command to the DC--DC Buck converter, $I_{ti}$ the filter (generator) current, and $I_{Li}(V_i,r_i)$ the current drawn by the load. The terms $R_{ti} \in \mathbb{R}_{>0}$, $L_{ti} \in \mathbb{R}_{>0}$, and $C_{ti} \in \mathbb{R}_{>0}$ are the internal resistance, capacitance (lumped with the line capacitances), and inductance of the DGU converter, respectively. 
\begin{remark}\textbf{(Modeling DC--DC converters).}
In order to work out DGU's dynamic model, we bank on the standard space averaging method \cite{Buckmodel}, enabling us to disregard the switching behavior of $V_{ti}$. Consequently, we have $V_{ti}=d_iV_{si}$, where $d_i \in [0, 1]$ is the duty cycle of the Buck converter, and $V_{si} \in \mathbb{R}$ the voltage of its power source. In this article, we suppose $V_{si}$ is large enough to avoid saturation of $d_i$.
\end{remark} 
Each DGU is equipped with a local voltage regulator, which along with other such regulators constitutes the \textit{primary control layer}. The main objective of these controllers is to ensure that the voltage at each DGU's PC tracks a reference voltage $V_{ref,i}$.  For this purpose, we augment each DGU with a multivariable PI regulator
 \begin{subequations}
 	\begin{align}
 	\dot{v}_i = \subss{e}{i} &= {V_{ref,i}}-{V}_i-\omega_i\label{eq:intdynamics},\\
 	\subss{\CC}{i}:~V_{ti} &=K_{[i]}\subss{\hat{x}}{i}+k_{4,i}\omega_i,\label{eq:controldec}
 	\end{align}
 \end{subequations}
 where $\subss{\hat{x}}{i}=\left[V_i\text{ }I_{ti}\text{ }v_i\right]^T\in\Rset^{3}$ is the state of augmented DGU,  $K_{[i]}=\left[k_{1,i}\text{ }k_{2,i}\text{ }k_{3,i}\right]\in\Rset^{1\times3}$ and $k_{4,i} \in \mathbb{R}$ are feedback gains, and $\omega_i$ is an exogenous variable generated by the \textit{secondary controller} (see Section \ref{sec:secondary} for more details). From \eqref{eq:DGUdynamics}-\eqref{eq:controldec}, the closed-loop DGU model is obtained as 
 \begin{equation}
 \label{eq:DGUdynamicsupdated}
  \begin{split}
 \subss{\hat{\Sigma}}{i}^{DGU}:
 &\left\{
 \begin{aligned}
 \dfrac{dV_{i}}{dt} &= \dfrac{1}{C_{ti}}I_{ti}-\dfrac{1}{C_{ti}}{I_{Li}(V_i,r_i)}- \dfrac{1}{C_{ti}}{I^*_i}\\
 \dfrac{dI_{ti}}{dt} &= {\alpha_i}V_{i} + {\beta_i}I_{ti}+ {\gamma_i}v_{   i}+\delta_i\omega_i\\
 \dfrac{dv_i}{dt} &= -{V}_i+{V_{ref,i}}-\omega_i
 \end{aligned}\right.\\
 \end{split},
 \end{equation}
 where
 \begin{equation}
 \label{eq:abg}
 \alpha_i=\frac{(k_{1,i}-1)}{L_{ti}},~\beta_i=\frac{(k_{2,i}-R_{ti})}{L_{ti}},~\gamma_i=\frac{k_{3,i}}{L_{ti}},
 \end{equation}
 and 
  \begin{equation}
 \label{eq:delta}
 \delta_i=\dfrac{k_{4,i}}{L_{ti}}.
 \end{equation} 
We highlight that variable $\omega_i=0$ when the secondary control layer is inactive or absent. The primary control architecture is hence decentralized as the computation of $V_{ti}$ requires only the state of $\subss{\hat{\Sigma}}{i}^{DGU}$. 

\textit{Load model:} The $i^{th}$ load is the parallel combination of Z, I, and E loads. The total current $I_{Li}(V_i,r_i)$, a function of voltage at $PC_i$, is given as 
\begin{equation}
\label{eq:loaddynamics}
I_{Li}(V_i,r_i)=\underbrace{Y_{Li}V_i}_{Z}+\underbrace{\bar{I}_{Li}}_{I} +\underbrace{V_i^{r_i-1}{P}^*_{Li}}_{E},
\end{equation}%
where $Y_{Li}$ is the conductance of the Z load while $r_i\in\mathbb{R}$ the exponent of the E load. $\bar{I}_{Li}$ and ${P}^*_{Li}$ are constants.  Note that an E load corresponds to a constant-power load when $r_i=0$, and covers wide range of physical loads depending upon the value of $r_i$. Some common examples are air conditioner {$r_i \in (0.50, 2.50)$}, resistance space heater ($r_i=2$), and fluorescent lighting ($r_i\in (1,3)$)  \cite{Kundur, Romero}.
%
%
%
\begin{assum}
	\label{ass:voltage}
	The reference signals $V_{ref,i}$ and PC voltages $V_i$ are strictly positive for all $t \geq 0$.
\end{assum}
We remark that Assumption \ref{ass:voltage} is not a limitation, and rather reflects a common constraint in microgrid operation.  Notice that, in Figure \ref{fig:ctrl_complete}, one end of the load is connected to the PC and the other to the ground, assumed be at zero potential by convention. Since the electric current and hence power flows from higher to lower potential, negative references and PC voltages will reverse the role of loads and make them power generators. In order to ensure power balance in the network, the generators will have to absorb this surplus power. This, in effect, defeats the fundamental goal of the mG, that is, the satisfiability of the loads by virtue of the power generated by the DGUs. Furthermore, if $V_i,V_{ref,i} \in \mathbb{R}^N$, then a zero-crossing for the voltages may take place. As voltages tend to zero, the power consumed by the ZIE loads with exponents $r_i<1$ approaches infinity.
 \section{Secondary control in DCmGs}
 \label{sec:secondary}
 \subsection{Problem formulation}
 The primary control layer is designed to track a suitable reference voltage $V_{ref,i}$ at the $PC_i$. As such, they do not ensure proportional current sharing and voltage balancing, defined as follows. 
 \begin{definition} \textbf{(Current sharing \cite{Tucci2018, Zhao}).}
 	\label{defn:cs}
 	The load is said to be shared proportionally among DGUs if
 	\begin{equation}
 	\label{eq:cs_defn}
 	\frac{I_{ti}}{I_{ti}^s} = \frac{I_{tj}}{I_{tj}^s}\hspace{4mm}\text{for all }i,j\in \mathcal{V},
 	\end{equation}
 	where $I_{ti}^s>0$ is the rated current of $DGU_i$.
 \end{definition}
 Current sharing ensures proportional sharing of loads amongst multiple DGUs, avoiding situations of DGU overloading, and preventing harm to the converter modules. As will be shown in the subsequent sections, the steady state voltages need not necessarily be equal to $V_{ref,i}$ when currents are shared proportionally. It is, however, desirable that PC voltages remain close to the nominal reference voltages for normal operation of the DCmG. To this aim, we state the objective of weighted voltage balancing in the following definition.
 \begin{definition}\textbf{(Weighted voltage balancing \cite{Cucuzzella2018robust}).}
 	\label{defn:vb} The voltages are said to be balanced in the steady state if 
 	\begin{equation}
 	\label{eq:vb_defn}
 	\langle[I^s_{t}]{V}\rangle =\langle[I^s_{t}]V_{ref}\rangle, 
 	\end{equation}
 	with  $V_{ref} \in \mathbb{R}^N$ being the vector of reference voltages.
 \end{definition}
 \vspace{.2cm}
 Voltage balancing implies that the weighted sum of PC voltages is equal to the the weighted sum of voltage references, ensuring boundedness of DCmG voltages. As noticed in \cite{Zhao}, in its absence, the PC voltages may experience drifts and increase monotonically despite the filter currents' being shared proportionally.
 \subsection{Consensus-based secondary control}
To achieve the aforementioned objectives, we use a consensus-based secondary control layer. Consensus filters are commonly employed for achieving
 global information sharing or coordination through distributed computations \cite{olfati2004consensus, Bullo}. In our case, we propose the following 
 consensus scheme 
 \begin{equation}
 \label{eq:basic_consensus}
 \dot {\Omega}_i =\sum\limits_{j=1, j\neq i} ^{N}a_{ij}\left(\frac{I_{ti}}{I_{ti}^s}-\frac{I_{tj}}{I_{tj}^s}\right),
 \end{equation}
  \begin{equation}
 \label{eq:omega}
 \omega_i=\frac{1}{I_{ti}^s}\sum\limits_{j=1, j\neq i} ^{N}a_{ij}\left({\Omega_i}-{\Omega_j}\right),
 \end{equation}
 where $a_{ij}>0$ if DGUs $i$ and $j$ are connected by a communication link ($a_{ij}=0$, otherwise). The corresponding \textit{communication graph} (see Figure \ref{fig:powernework}), assumed to be undirected and connected, is $\mathcal{G}_c=(\mathcal{D}, \mathcal{E}_c, W_c)$ where $(i,j)\in\mathcal{E}_{c}\Longleftrightarrow a_{ij}> 0$ and $W_c =\text{diag}\{a_{ij}\}$. Note that the topology of $\mathcal{G}_c$ and $\mathcal{G}_{e}$ can be completely different. As shown in Figure \ref{fig:ctrl_complete}, the consensus variable $\omega_i$
 modifies the primary voltage controllers; see \eqref{eq:intdynamics} and \eqref{eq:controldec}.
\begin{remark}\textbf{(Structure of secondary voltage regulators).}
	The proposed controllers have a distributed control structure, and exchange $\Omega_i$ and  $I_{ti}$ with their communication neighbors. Utilizing the received information, the $i^{th}$ DGU simultaneously computes the variable $\omega_i$ used to adapt the voltage references $V_{ref,i}$ and the DGU command  $V_{ti}$, with a view to attaining \eqref{eq:cs_defn}-\eqref{eq:vb_defn}.  It is worth noting that the scheme discussed in this work is different from \cite{Tucci2018}, where only $I_{ti}$ is communicated, and uniquely $V_{ref,i}$ is altered. In addition, this work does not reduce DGUs to ideal voltage generators or first-order systems, and eliminates assumptions on the topology of the communication network.
\end{remark}
 The complete dynamics of the DCmG under primary and secondary control are given by \eqref{eq:powerline}--\eqref{eq:delta} along with \eqref{eq:basic_consensus}--\eqref{eq:omega}. These equations can be compactly rewritten as
 \begin{equation}
 \label{eq:globalstatespace}
 \dot{X}=\mathcal{A}X+\BB(V),
 \end{equation}
 where $X= \begin{bmatrix}
 {V}^T
 &{I_t}^T
 &{v}^T
 &{I}^T
 &{\Omega}^T
 \end{bmatrix}^T\in \mathbb{R}^{4N+M}, $
 \begin{align*}
 \begin{small}
 \mathcal{A}=\underbrace{\begin{bmatrix}
 	-C_t^{-1}Y_{L} &C_t^{-1} &\textbf{0} &-C_t^{-1}B &\textbf{0}\\
 	[\alpha] &[\beta] &[\gamma] &\textbf{0} &[\delta][I^s_{t}]^{-1}\mathcal{L}_c\\
 	-\textbf{I} &\textbf{0}  &\textbf{0}  &\textbf{0}  &-[I^s_{t}]^{-1}\mathcal{L}_c\\
 	L^{-1}B^T &\textbf{0} &\textbf{0}  &-L^{-1}R &\textbf{0}\\
 	\textbf{0} &\mathcal{L}_c[I^s_{t}]^{-1}&\textbf{0}&\textbf{0}&\textbf{0}
 	\end{bmatrix}}_{\mathcal{A} \in \mathbb{R}^{(4N+M) \times (4N+M)}}
 , \end{small}
 \end{align*}
 and
 \begin{align*}
 \begin{small}
 \BB(V)=
 \underbrace{\begin{bmatrix}
 	-C_t^{-1}(\bar{I}_{L}+[V^{r-\textbf{1}_N}]P^*_{L})\\
 	\textbf{0}_N\\
 	V_{ref}\\
 	\textbf{0}_M\\
 	\textbf{0}_N
 	\end{bmatrix}}_{\mathcal{B}(V) \in \mathbb{R}^{(4N+M)}}.
 \end{small}
 \end{align*}
Note that $V \in \mathbb{R}^N$, $I_t \in \mathbb{R}^N$, $v \in \mathbb{R}^N$, $I \in \mathbb{R}^M$, ${P}^*_{L} \in\mathbb{R}^N$, $\bar{I}_{L} \in\mathbb{R}^N$, $r \in\mathbb{R}^N$, $\alpha \in \mathbb{R}^N$,  $\beta\in \mathbb{R}^N$, $\gamma \in \mathbb{R}^N$, $\delta \in \mathbb{R}^N$  are vectors of PC voltages, filter currents, integrator states, line currents, load powers, load currents, E load exponents, and parameters $\alpha_i$, $\beta_i$, $\gamma_i$, $\delta_i$ respectively. The matrices $R \in \mathbb{R}^{M \times M}_{>0}$, $L \in \mathbb{R}^{M \times M}_{>0}$, $Y_L \in \mathbb{R}^{N \times N}_{>0}$,  and $C_t \in \mathbb{R}^{N \times N}_{>0}$ are diagonal matrices collecting electrical parameters $R_l$, $L_l$, $Y_{Li}$, and $C_{ti}$, respectively. The matrix ${B} \in \mathbb{R}^{N \times M}$ is the incidence matrix of the electrical network while $\mathcal{L}_c \in \mathbb{R}^{N \times N}$  is the Laplacian matrix of the communication network.
Notice that the dynamics of the DCmG controlled only by the primary layer can be recuperated by setting $\Omega=\textbf{0}_N$ as
\begin{equation}
\label{eq:globalstatespacesanscom}
\begin{split}
\dot{X}'=&=\underbrace{\begin{bmatrix}
	-C_t^{-1}Y_{L} &C_t^{-1} &\textbf{0} &-C_t^{-1}B\\
	[\alpha] &[\beta] &[\gamma] &\textbf{0}\\
	-\textbf{I} &\textbf{0}  &\textbf{0}  &\textbf{0}\\
	L^{-1}B^T &\textbf{0} &\textbf{0}  &-L^{-1}R\\
	\end{bmatrix}}_{\mathcal{A}' \in \mathbb{R}^{(3N+M) \times (3N+M)}}
\underbrace{\begin{bmatrix}
{V}'\\
{I_t}'\\
{v}'\\
{I}'\\
\end{bmatrix}}_{\dot{X}'}\\
&+
\underbrace{\begin{bmatrix}
	-C_t^{-1}(\bar{I}_{L}+[(V')^{r-\textbf{1}_N}]P^*_{L})\\
	\textbf{0}_N\\
	V_{ref}\\
	\textbf{0}_M
	\end{bmatrix}}_{\mathcal{B}'(V) \in \mathbb{R}^{(3N+M)}}
\end{split},
\end{equation}
Indeed, \eqref{eq:globalstatespacesanscom} dictates the DCmG when the secondary layer is deactivated or in case of a communication collapse. We highlight that, for the sake of clarity, the superscript $'$ is introduced to denote DCmG states without secondary control. 

 The overall model of the DCmG \eqref{eq:globalstatespace} having been deduced, the next step is to show that the network is stable, and attains the control objectives \eqref{eq:cs_defn} and \eqref{eq:vb_defn} in the steady state. To this end, we first start by characterizing the equilibria of \eqref{eq:globalstatespace}.

\subsection{Analysis of Equilibria}
\label{sec:eq}
Before analyzing the stability of the closed-loop system \eqref{eq:globalstatespace}, we study when an equilibrium exists such that  \eqref{eq:cs_defn} and \eqref{eq:vb_defn} are jointly attained. We emphasize that, in a primary-controlled DCmG given by \eqref{eq:globalstatespacesanscom}, a reference voltage $V_{ref,i}$ is directly enforced at the $i^{th}$ PC. Thus, a unique equilibrium point  
\begin{equation}
\label{eq:equilibriumstatesanscom}
\begin{split}
\bar{X}'=&\begin{bmatrix}
V_{ref}\\
B\bar{I}+Y_LV_{ref}+[V_{ref}]^{r-\textbf{1}_N}P_L^*+\bar{I}_L\\
-[\gamma]^{-1}([\alpha] V_{ref}+[\beta]{\bar{I_t}})\\
R^{-1}B^TV_{ref}
\end{bmatrix}
\end{split},
\end{equation}
always exists \cite{Nahata}. On the contrary, once the secondary layer is activated, the voltage references are tweaked by $\omega_i$ (see \eqref{eq:intdynamics}), which is governed by equations \eqref{eq:basic_consensus} and \eqref{eq:omega}. Since the presence of exponential loads essentially renders the DCmG dynamics nonlinear, it may occur that an equilibrium point fails to exist (see Section \ref{sec:simulations} for a simulation example). Hence, in this section, we pursue whether the closed-loop system \eqref{eq:globalstatespace} possesses an equilibrium point, and if so, under what conditions on loads, topology of electrical and communication networks, and controller gains. We set off by presenting the following lemma.
\begin{lemma}
	\label{lem:equilibriumchar}
	Consider the DCmG dynamics \eqref{eq:globalstatespace}. The following statements hold: 
	\begin{enumerate}
		\item In steady state, the objectives \eqref{eq:cs_defn} and \eqref{eq:vb_defn} are attained;
		\item A steady state solution $\bar{X}=[\bar{V}^T, \bar{I}_t^T, \bar{v}^T, \bar{I}^T, \bar{\Omega}^T]^T$ exists only if there exists a $\bar{V}$ concurrently satisfying the following equations
		\begin{subequations}
			\label{eq:equilibriumexistence}
			\begin{equation}
			\mathcal{L}_e\bar{V}+\mathcal{L}_{t}[I^s_{t}]^{-1}([\bar{V}^{r-\textbf{1}_N}]P_L^*+\bar{I}_L+Y_L\bar{V}) =0, \label{eq:eqpowerflow}
			\end{equation}
			\begin{equation}
			\textbf{1}^T_N[I^s_{t}]\bar{V}=\textbf{1}^T_N[I^s_{t}]V_{ref},\label{eq:voltagebalancing}
			\end{equation}	
		\end{subequations}
		where  $\mathcal{L}_t=[I^s_{t}]-(\textbf{1}_N^T[I^s_{t}]\textbf{1}_N)^{-1}[I^s_{t}]\textbf{1}_N\textbf{1}_N^T[I^s_{t}]$, and $\mathcal{L}_e=BR^{-1}B^T$ is the Laplacian of the electrical network. 
	\end{enumerate}
\end{lemma}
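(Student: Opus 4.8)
The plan is to set $\dot X = \mathbf{0}$ in the block dynamics encoded by $\mathcal{A}$ and $\mathcal{B}(V)$, read off the five equilibrium equations row by row, and then exploit the null-space structure of the two Laplacians $\mathcal{L}_c$ and $\mathcal{L}_e$ together with the explicit form of $\mathcal{L}_t$.

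For statement (1), I would begin with the last block row, which at equilibrium reads $\mathcal{L}_c[I^s_{t}]^{-1}\bar{I}_t = \mathbf{0}_N$. Since $\mathcal{G}_c$ is connected, $\mathcal{N}(\mathcal{L}_c) = \mathrm{span}\{\mathbf{1}_N\}$, so $[I^s_{t}]^{-1}\bar{I}_t = c\,\mathbf{1}_N$ for some scalar $c$; component-wise this is precisely $\bar{I}_{ti}/I_{ti}^s = c$ for all $i$, i.e. \eqref{eq:cs_defn}. For voltage balancing I would take the third block row, $-\bar{V} - [I^s_{t}]^{-1}\mathcal{L}_c\bar{\Omega} + V_{ref} = \mathbf{0}_N$, and left-multiply by $\mathbf{1}_N^T[I^s_{t}]$; because $\mathcal{L}_c$ is symmetric with $\mathcal{L}_c\mathbf{1}_N = \mathbf{0}_N$, one has $\mathbf{1}_N^T\mathcal{L}_c = \mathbf{0}_N^T$, the $\bar{\Omega}$ term vanishes, and what remains is $\mathbf{1}_N^T[I^s_{t}]\bar{V} = \mathbf{1}_N^T[I^s_{t}]V_{ref}$, which is \eqref{eq:vb_defn} after division by $N$. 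The same identity is exactly \eqref{eq:voltagebalancing}.

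For statement (2), the remaining task is to produce the power-flow equation \eqref{eq:eqpowerflow} in $\bar{V}$ alone. I would eliminate $\bar{I}$ via the fourth block row $B^T\bar{V} = R\bar{I}$, i.e. $\bar{I} = R^{-1}B^T\bar{V}$, and substitute into the first block row; multiplying through by $C_t$ yields $\bar{I}_t = \mathcal{L}_e\bar{V} + Y_L\bar{V} + \bar{I}_L + [\bar{V}^{r-\mathbf{1}_N}]P_L^*$ with $\mathcal{L}_e = BR^{-1}B^T$. The decisive move is to left-multiply this identity by $\mathcal{L}_t[I^s_{t}]^{-1}$ and invoke two facts about $\mathcal{L}_t$: first $\mathcal{L}_t\mathbf{1}_N = \mathbf{0}_N$, so that $\mathcal{L}_t[I^s_{t}]^{-1}\bar{I}_t = c\,\mathcal{L}_t\mathbf{1}_N = \mathbf{0}_N$ annihilates the left-hand side through the current-sharing relation of part (1); second $\mathcal{L}_t[I^s_{t}]^{-1}\mathcal{L}_e = \mathcal{L}_e$, which follows from $\mathcal{L}_t[I^s_{t}]^{-1} = \mathbf{I} - (\mathbf{1}_N^T[I^s_{t}]\mathbf{1}_N)^{-1}[I^s_{t}]\mathbf{1}_N\mathbf{1}_N^T$ and $\mathbf{1}_N^T\mathcal{L}_e = \mathbf{1}_N^T BR^{-1}B^T = \mathbf{0}_N^T$ (the columns of $B$ sum to zero). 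These two identities collapse the projected relation exactly to \eqref{eq:eqpowerflow}, and combined with \eqref{eq:voltagebalancing} from part (1) this gives the claimed necessary conditions on $\bar{V}$.

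I expect the only real obstacle to be bookkeeping rather than conceptual, namely recognizing $\mathcal{L}_t[I^s_{t}]^{-1}$ as the correct operator that kills the unknown sharing constant $c$ while leaving $\mathcal{L}_e\bar{V}$ invariant, and verifying $\mathcal{L}_t\mathbf{1}_N = \mathbf{0}_N$ and $\mathcal{L}_t[I^s_{t}]^{-1}\mathcal{L}_e = \mathcal{L}_e$ from the explicit definition of $\mathcal{L}_t$. Everything else is direct substitution, and connectedness of $\mathcal{G}_c$ and $\mathcal{G}_e$ enters only through the Laplacian null-space facts quoted above; note also that the lemma asserts necessity only (``exists only if''), so no reconstruction of $\bar{I}$, $\bar{I}_t$, $\bar{v}$, $\bar{\Omega}$ from $\bar{V}$ is required here.
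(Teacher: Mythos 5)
Your proof is correct and follows essentially the same route as the paper: both set the block rows of \eqref{eq:globalstatespace} to zero, obtain current sharing from $\mathcal{L}_c[I^s_{t}]^{-1}\bar I_t=\mathbf{0}_N$ and the connectedness of $\mathcal{G}_c$, obtain \eqref{eq:voltagebalancing} from the $v$-row (the paper phrases it as $[I^s_{t}](V_{ref}-\bar V)\in H^1$, you equivalently left-multiply by $\mathbf{1}_N^T[I^s_{t}]$), and obtain \eqref{eq:eqpowerflow} by eliminating $\bar I$ and $\bar I_t$ from the first row. The only organizational difference is that the paper computes the sharing constant $\epsilon$ explicitly and substitutes, whereas you annihilate it with the projector $\mathcal{L}_t[I^s_{t}]^{-1}$; the paper also records the converse reconstruction of $\bar I_t,\bar I,\bar\Omega,\bar v$ from a solution $\bar V$, which, as you note, the ``only if'' statement does not strictly require.
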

\begin{proof} 
	Any steady state solution of \eqref{eq:globalstatespace} satisfies
	\begin{subequations}
		\begin{align}
		-Y_{L}\bar{V} 	-\bar{I}_{L}-[\bar{V}^{r-\textbf{1}_N}]P^*_{L}+{\bar{I}_t} -B\bar{I} &=0\label{eq:eq1}\\
		[\alpha]\bar{V} +[\beta] \bar{I_t}+[\gamma]	\bar{v} +[\delta][I^s_{t}]^{-1}\mathcal{L}_c\bar{\Omega}&=0\label{eq:eq2}\\
		V_{ref}-\bar{V} -[I^s_{t}]^{-1}\mathcal{L}_c\bar{\Omega}&=0\label{eq:eq3}\\
		B^T \bar{V}-R\bar{I} &=0\label{eq:eq4}\\
		\mathcal{L}_c[I^s_{t}]^{-1}	\bar{I_t}&=0\label{eq:eq5}
		\end{align}
	\end{subequations}
	One has from \eqref{eq:eq5} that $\bar{I}_t=\epsilon [I^s_t]\textbf{1}_N$ for some $\epsilon\in\mathbb{R}$, warranting the attainment of \eqref{eq:cs_defn}. Since $\textbf{1}^T_NB=\textbf{0}_M$, \eqref{eq:eq1} implies that $\textbf{1}_N^T\bar{I}_t=\textbf{1}_N^T(Y_{L}\bar{V} +\bar{I}_{L}+[\bar{V}^{r-\textbf{1}_N}]P^*_{L})$, then $\epsilon=(\textbf{1}_N^T[I^s_{t}]\textbf{1}_N)^{-1}\textbf{1}_N^T(Y_{L}\bar{V} +\bar{I}_{L}+[\bar{V}^{r-\textbf{1}_N}]P^*_{L}).$ We can equivalently represent 
	\begin{equation}
	\label{eq:Itbar}
	\bar{I}_t=(\textbf{1}_N^T[I^s_{t}]\textbf{1}_N)^{-1}[I^s_{t}]\textbf{1}_N\textbf{1}_N^T(Y_{L}\bar{V} +\bar{I}_{L}+[\bar{V}^{r-\textbf{1}_N}]P^*_{L}).
	\end{equation}
	Using \eqref{eq:eq4},
	\begin{equation}
	\label{eq:Ibar}
	\bar{I}=R^{-1}B^T\bar{V}.
	\end{equation}
	On substituting \eqref{eq:Itbar} and \eqref{eq:Ibar} into \eqref{eq:eq1}, one obtains \eqref{eq:eqpowerflow}. Moreover, for an $\bar{\Omega}$ to exist such that \eqref{eq:eq3} holds, $[I^s_t](V_{ref}-\bar{V})\in H^1$, which yields \eqref{eq:voltagebalancing} and guarantees \eqref{eq:vb_defn} in steady state.  If there exists a $\bar{V}$ solving \eqref{eq:equilibriumexistence}, $\bar{I}_t$ and $\bar{I}$ exist due to \eqref{eq:Itbar} and \eqref{eq:Ibar}, respectively. As \eqref{eq:voltagebalancing} holds, from \eqref{eq:eq3}, an equilibrium vector  $\bar{\Omega}=\mathcal{L}_c^\dagger[I^s_t](V_{ref}-\bar{V})+\eta\textbf{1}_N, \eta \in \mathbb{R}$ exists. Finally, on substituting $\bar{V}, \bar{I}_t, \bar{I},$ and $\bar{\Omega}$ into \eqref{eq:eq2}, one has $\bar{v}=[\gamma]^{-1}\left(([\alpha]+[\delta])\bar{V}-[\delta]V_{ref} +[\beta] \bar{I}_t\right)$.
\end{proof}
Note that equations \eqref{eq:eqpowerflow}--\eqref{eq:voltagebalancing} represent the DC power-flow equations when DGU currents are shared proportionally and PC voltages balanced. These equations are governed only by the electric network Laplacian $\LL_e$, ZIE load parameters, DGU rated currents $I^s_t$, and voltage references $V_{ref}$. We conclude that the communication network Laplacian $\LL_c$ and the controller \eqref{eq:intdynamics}--\eqref{eq:controldec} has no bearing on their solvability. In the ensuing discussion, we analyze the existence of a voltage solution to \eqref{eq:equilibriumexistence} when $r=\textbf{0}_N$, that is, the exponential loads behave as P loads.  By setting $r=\textbf{0}_N$, one can rewrite equation \eqref{eq:equilibriumexistence} as 
\begin{equation}
\label{eq:PFmatrix}
\tilde{\mathcal{L}}V=\tilde{I}-\tilde{\LL}_t[V^{-1}]P_L^*,
\end{equation}
where $\tilde{\mathcal{L}}=\begin{bmatrix}
\LL_p\\
\textbf{1}^T_N[I^s_{t}]
\end{bmatrix}$, $\LL_p=\mathcal{L}_e+\mathcal{L}_{t}[I^s_{t}]^{-1}Y_L$, $\tilde{I}=\begin{bmatrix}-\mathcal{L}_{t}[I^s_{t}]^{-1}\bar{I}_L\\
\textbf{1}^T_N[I^s_{t}]V_{ref}
\end{bmatrix}$, and $\tilde{\LL}_t=\begin{bmatrix}\mathcal{L}_{t}[I^s_{t}]^{-1}\\
\textbf{0}\end{bmatrix}$. 
\begin{remark}\textbf{(Solvability of \eqref{eq:PFmatrix}).}
The existence and uniqueness of solutions of power-flow equations have been tackled in \cite{simpson2016voltage, LaBella}. As shown in what follows, the tools therein cannot be directly applied to ascertain the solvability of \eqref{eq:PFmatrix} as \eqref{eq:voltagebalancing} restricts the voltage solutions onto a hyperplane. 
\end{remark} 
We are now in a position to state the main result.
\begin{theorem}
	\label{thm:existence}
	\textbf{(Existence and uniqueness of a voltage solution).} Consider \eqref{eq:PFmatrix} along with the vector $V^*=\tilde{\LL}^\dagger\tilde{I}$. Assume that $[V^*]$ is invertible and define $P_{cri}=4[V^*]^{-1}\tilde{\LL}^\dagger\tilde{\LL}_t[V^*]^{-1}$. Assume that the network parameters and loads satisfy 
	\begin{equation}
	\label{eq:Criticalpower}
	\Delta=||P_{cri}P_L^*||_{\infty}<1,
	\end{equation}
	and define the percentage deviations $\delta_{-}\in[0,\frac{1}{2})$ and $\delta_{+}\in(\frac{1}{2},1]$ as the unique solutions of $\Delta=4\delta_{\pm}(1-\delta_{\pm})$. The following statements hold:
	\begin{enumerate}
		\item[1)] There exists a unique voltage solution $V \in \HH(\delta_{-})$ of \eqref{eq:PFmatrix}, where
		\begin{equation}
		\label{eq:existenceset}
		\HH(\delta_{-})\coloneqq \{V\in \mathbb{R}^N |  (1-\delta_{-})V^*\leq V \leq(1+\delta_{-})V^*\}.
		\end{equation}
		Moreover, there exist no solutions of \eqref{eq:PFmatrix} in the open set 
		\begin{equation}
		\label{eq:nonexistenceset}
		\II\coloneqq \{V\in \mathbb{R}^N |  ( V >(1-\delta_{+})V^* ~\text{and}~V\notin \HH(\delta_{-}) \};
		\end{equation}
		\item[2)]  For $P_L^*=0$, $V^*$ is the unique solution of \eqref{eq:PFmatrix};
		\item [3)] If $(1-\delta_{+})V^* <V_{ref}$, then, there exist no solutions of \eqref{eq:PFmatrix} in the closed set 
		\begin{equation}
		\label{eq:nonexistenceset1}
		\JJ\coloneqq \{V\in \mathbb{R}^N |  ( V \leq (1-\delta_{+})V^* \}.
		\end{equation} 
	\end{enumerate}
\end{theorem}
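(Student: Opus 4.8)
The plan is to recast \eqref{eq:PFmatrix} as a fixed-point equation and then dispatch the three statements in turn. First I would establish that $\tilde{\LL}$ has full column rank and that the reformulation is exact. Since the electrical graph is connected and $\mathbf{1}_N^T\LL_t=0$, one gets $\mathbf{1}_N^T\LL_p=\mathbf{1}_N^T\LL_e+\mathbf{1}_N^T\LL_t[I^s_t]^{-1}Y_L=0$, so the top block $\LL_p$ of $\tilde{\LL}$ is rank-deficient only in the $\mathbf{1}_N$ direction, while the balancing row $\mathbf{1}_N^T[I^s_t]$ restores it; hence $\rank(\tilde{\LL})=N$ and $\tilde{\LL}^\dagger\tilde{\LL}=I_N$. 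The same identity gives $\NN(\tilde{\LL}^T)=\mathrm{span}\{[\mathbf{1}_N^T,0]^T\}$, and since $\mathbf{1}_N^T$ annihilates the top blocks of both $\tilde I$ and $\tilde{\LL}_t[V^{-1}]P_L^*$, these vectors lie in $\RR(\tilde{\LL})$. Therefore applying $\tilde{\LL}^\dagger$ loses no information, and \eqref{eq:PFmatrix} is equivalent to $V=V^*-\tilde{\LL}^\dagger\tilde{\LL}_t[V^{-1}]P_L^*$.

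Second, I would normalize by $u=[V^*]^{-1}V$, which maps $\HH(\delta_-)$ onto the box $\{u:\|u-\mathbf{1}_N\|_\infty\le\delta_-\}$ and turns the fixed point into $u=\mathbf{1}_N-\tfrac14 P_{cri}[u]^{-1}P_L^*=:F(u)$. On this box I would invoke Banach's theorem in $\|\cdot\|_\infty$: the self-map property follows from $\|F(u)-\mathbf{1}_N\|_\infty\le\tfrac{\Delta}{4}\max_j u_j^{-1}\le\tfrac{\Delta}{4}(1-\delta_-)^{-1}=\delta_-$, the last equality being exactly the defining relation $\Delta=4\delta_-(1-\delta_-)$; the contraction property follows from $\|[u]^{-1}-[u']^{-1}\|_\infty\le(1-\delta_-)^{-2}\|u-u'\|_\infty$, yielding contraction factor $\tfrac{\Delta}{4}(1-\delta_-)^{-2}=\delta_-/(1-\delta_-)<1$ since $\delta_-<\tfrac12$. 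This proves unique existence of $V\in\HH(\delta_-)$, i.e.\ the first half of statement~1). Statement~2) is the degenerate case $P_L^*=0$: the equation reduces to $\tilde{\LL}V=\tilde I$, whose unique solution is $V^*=\tilde{\LL}^\dagger\tilde I$ by full column rank and the consistency established above.

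Third, for the non-existence claims I would run a scalar comparison built on the quadratic $\Delta=4\delta(1-\delta)$. Writing $\rho:=\|u-\mathbf{1}_N\|_\infty$ and $m:=\min_j u_j$, any fixed point obeys $\rho\le\tfrac{\Delta}{4m}$ and $m\ge 1-\rho$. On $\II$ one has $u>(1-\delta_+)\mathbf{1}_N$, i.e.\ $m>\delta_-$, whence $\rho<\tfrac{\Delta}{4\delta_-}=\delta_+$; feeding $m\ge 1-\rho$ back gives $4\rho(1-\rho)\le\Delta=4\delta_\pm(1-\delta_\pm)$, and since $\rho<\delta_+$ this forces $\rho\le\delta_-$, i.e.\ $u\in\HH(\delta_-)$, contradicting $u\notin\HH(\delta_-)$. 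This gives the second half of statement~1). Statement~3) is immediate from the balancing row of \eqref{eq:PFmatrix}: every solution satisfies \eqref{eq:voltagebalancing}, yet for $V\in\JJ$ the hypothesis yields $V\le(1-\delta_+)V^*<V_{ref}$ componentwise, so $[I^s_t]\succ 0$ forces $\mathbf{1}_N^T[I^s_t]V<\mathbf{1}_N^T[I^s_t]V_{ref}$, contradicting the balancing equality.

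The step I expect to be the main obstacle is the self-map/contraction estimate in statement~1): the constant $\Delta$ must be read as the induced $\infty$-norm of $P_{cri}[P_L^*]$ (maximum absolute row sum) rather than the vector $P_{cri}P_L^*$, because $P_{cri}$ is built from a pseudoinverse and need not be sign-definite, so the bounds above only close in the matrix-norm form; making the quadratic $\Delta=4\delta_\pm(1-\delta_\pm)$ surface exactly at the box boundary is precisely what pins down $\delta_-$ (existence) and $\delta_+$ (the non-existence threshold). The comparison on $\II$ is the other delicate point, since it must extract both a lower and an upper bound on the extreme components of $u$ from a single scalar inequality.
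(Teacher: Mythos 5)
Your proposal is correct and follows the same route as the paper: characterize $\RR(\tilde{\LL})$ and $\NN(\tilde{\LL}^T)$ to justify the pseudoinverse reformulation $V=V^*-\tilde{\LL}^\dagger\tilde{\LL}_t[V^{-1}]P_L^*$, normalize by $[V^*]^{-1}$ to obtain a fixed-point equation, prove statement~1 by a contraction argument, get statement~2 as the degenerate case, and prove statement~3 by intersecting $\JJ$ with the balancing hyperplane \eqref{eq:voltagebalancing} and using positivity of $I^s_t$ (your statement~3 argument is verbatim the paper's). The one substantive difference is that the paper does not carry out the contraction step at all: after arriving at $x=-\tfrac14 P_{cri}[P_L^*]r(x)$ it invokes Supplementary Theorem~1 of the cited reference on voltage collapse and declares statement~1 a direct consequence, whereas you execute the Banach self-map/contraction estimates and the non-existence comparison on $\II$ explicitly, and both of your computations (the identities $\delta_-+\delta_+=1$, $\Delta/(4\delta_-)=\delta_+$, and the quadratic comparison $4\rho(1-\rho)\le\Delta$ with $\rho<\delta_+$ forcing $\rho\le\delta_-$) are sound. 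Your flagged obstacle is a genuinely useful observation that the paper glosses over: the hypothesis \eqref{eq:Criticalpower} is stated as the vector norm $\|P_{cri}P_L^*\|_\infty$, but the estimates only close with the induced norm $\|P_{cri}[P_L^*]\|_\infty$; these coincide in the cited reference because there the critical matrix is the inverse of an M-matrix and hence entrywise nonnegative, while here $P_{cri}$ is built from the pseudoinverse $\tilde{\LL}^\dagger$ and its entrywise nonnegativity is neither assumed nor proved. So your version is, if anything, more careful than the paper's on the point where the citation is being stretched beyond its original hypotheses.
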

\begin{proof}
	Any voltage solution to \eqref{eq:PFmatrix} must verify  $\tilde{I}-\tilde{\LL}_t[V]^{-1}P_L^*\in \RR(\tilde{\LL})$. We therefore start by characterizing the column space of $\tilde{\LL}\in \mathbb{R}^{(N+1) \times N}$. Let $\{l_1, l_2, \cdots , l_N \}, l_i \in \mathbb{R}^{N+1}$ be its column vectors. Therefore, 
	$$
	\begin{aligned}
	\RR(\tilde{\LL})&=\left\{ \sum_{i=1}^{N}a_il_i ~|~a_i \in \mathbb{R} \right\}\\
	&=\left\{ \underbrace{\begin{bmatrix}\LL_p \\ \textbf{0}\end{bmatrix}a}_{c_1}+(\sum_{i=1}^{N}I^s_{ti}a_i) \underbrace{\begin{bmatrix}\textbf{0}_N \\ 1\end{bmatrix}}_{c_2}~|~a\in\mathbb{R}^N,a_i \in \mathbb{R} \right\}
	\end{aligned}.
	$$
	The vectors $c_1$ and $c_2$ are orthogonal to each other.  As Lemma \ref{lem:rangeLE} establishes that $\RR((\LL_p))=H^1$, the vector $c_1$ can be equivalently written as 
	$$c_1=\sum_{i=1}^{N-1}\tilde{a}_i\underbrace{\begin{bmatrix}h_i\\0\end{bmatrix}}_{\tilde{h}_i},~\tilde{a}_i\in\mathbb{R},$$
	where $\{{h}_1\cdots{h}_{N-1}\},~h_i\in \mathbb{R^N}$ is an orthogonal basis of $H^1$. Hence, 
	$$\RR(\tilde{\LL})=\left\{ \sum_{i=1}^{N}\tilde{a}_i\tilde{h}_i ~|~\tilde{a}_i \in \mathbb{R} \right\},$$
	where $\tilde{h}_N=c_2$. Moreover, $\{\tilde{h}_1\cdots\tilde{h}_{N}\}$ is an orthogonal basis of $\RR(\tilde{\LL})$. Using the deduced basis, one can easily verify that $\tilde{I}\in \RR(\tilde{\LL})$ and $\RR(\tilde{\LL}_t) \subset \RR(\tilde{\LL})$. It, therefore, holds that $\tilde{I}-\tilde{\LL}_t[V]^{-1}P_L^*\in \RR(\tilde{\LL})~ \forall~ V\in \mathbb{R}^N$.
	Furthermore, we note that $\tilde{\LL}$ is a matrix with full-column rank as $dim(\RR(\tilde{\LL}))=N$. By the fundamental theorem of linear algebra,  \cite{strang1993introduction}, $dim(\RR(\tilde{\LL^T}))=dim(\RR(\tilde{\LL}))=N$, and thus $\RR(\tilde{\LL}^T)=\mathbb{R}^N$. 
	Since the linear map $\tilde{\LL}( \RR(\tilde{\LL}^T)|\RR(\tilde{\LL}))$ is always invertible  \cite{strang1993introduction}, and as $V\in \RR(\tilde{\LL}^T)$, $\tilde{I}-\tilde{\LL}_t[V]^{-1}P_L^*\in \RR(\tilde{\LL})~ \forall~ V\in \mathbb{R}^N$, one can rewrite \eqref{eq:PFmatrix} as
	\begin{equation}
	\label{eq:PFmatrixdagger}
	\begin{split}
	V&=\tilde{\mathcal{L}}^\dagger\tilde{I}-\tilde{\mathcal{L}}^\dagger\tilde{\LL}_t[V^{-1}]P_L^*\\
	&=V^*-\tilde{\mathcal{L}}^\dagger\tilde{\LL}_t[V^{-1}]P_L^*
	\end{split},
	\end{equation}
	where $\tilde{\LL}^\dagger=(\tilde{\LL}^T\tilde{\LL})^{-1}\tilde{\LL}^T$. We highlight that $\tilde{\LL}^T\tilde{\LL}$ is always invertible for matrices with full-column rank  \cite{strang1993introduction}. On utilizing the change of variables $x\coloneqq [V^*]^{-1}[V]-\textbf{1}_N$, we obtain the equivalent representation of \eqref{eq:PFmatrixdagger} as 
	\begin{subequations}
		\label{eq:PFmatrixdagger1}
		\begin{align}
		x=f(x)&\coloneqq -[V^*]^{-1}\tilde{\mathcal{L}}^\dagger\tilde{\LL}_t[V^*]^{-1}[P^*_L]r(x) \label{eq:PFa}\\
		&=-\frac{1}{4}P_{cri}[P_L^*]r(x)\label{eq:PFb},
		\end{align}
	\end{subequations}
	where $r(x)=\begin{bmatrix}\frac{1}{1+x_1},\cdots,\frac{1}{1+x_N}\end{bmatrix}^T$. Having transformed \eqref{eq:PFmatrixdagger} into \eqref{eq:PFb}, we can now apply the contraction mapping arguments presented in \cite{simpson2016voltage}. Statement 1) is a direct consequence of the Supplementary Theorem 1 of \cite{simpson2016voltage}.
	
	The proof of Statement 2) follows from \eqref{eq:PFmatrixdagger} and the invertibility of $\tilde{\LL}( \RR(\tilde{\LL}^T)|\RR(\tilde{\LL}))$. With the objective of proving Statement 3), we consider a voltage solution $V\in \JJ$. 
	Given that $(1-\delta_{+})V^* <V_{ref}$, any voltage solution $V\in \JJ$ can be represented as 
	$V=V_{ref}-b,~b\in \mathbb{R}^N_{>0}.$
	It is evident that a voltage solution $V\in\JJ$ to \eqref{eq:PFmatrix} must satisfy \eqref{eq:voltagebalancing}. Therefore, 
	\begin{equation}
	\label{eq:voltagebalancing1}
	\textbf{1}^T_N[I^s_{t}](V_{ref}-b)=\textbf{1}^T_N[I^s_{t}]V_{ref} ~\imply~ \textbf{1}^T_N[I^s_{t}]b=0.
	\end{equation}
	Since $I^s_{t},b, \in \mathbb{R}^N_{>0}$, \eqref{eq:voltagebalancing1} never holds. This concludes the proof of statement 3).
\end{proof}
\begin{remark}
	Under the sufficient conditions provided in Theorem \ref{thm:existence}, the existence of an equilibrium point depends upon the critical power matrix $P_{cri}$ and the power absorption $P_L^*$. As pointed out in~\cite{simpson2016voltage}, one can interpret $P_{cri}$ as the sensitivity of PC voltages to variations in power absorption by $P$ loads. Note that $P_{cri}$ is defined by the electrical topology of the DCmG network, the Z and I components of loads, and the voltage $V^*$ appearing at PCs when $P_L^*=0$. Clearly, from \eqref{eq:PFmatrix}, the communication network topology $\GG_c$ has no impact on $P_{cri}$. We also note that \eqref{eq:Criticalpower} is easier to satisfy for small values of $P_{Li}^*$.
\end{remark}

\begin{remark}
	The set $\HH(\delta_{-})$ in Statement~1, Theorem  represents a set where a unique voltage solution $V$ to \eqref{eq:PFmatrix} lies, whereas $\II$ is a set around $\HH(\delta_{-})$ where no solution exists. We note that the definitions and implications of these two sets resemble those of the \textit{secure solution} and \textit{solutionless} sets in \cite{simpson2016voltage}. Although, in our case, the variables $V^*$ and $P_{cri}$ defining these sets are different from \cite{simpson2016voltage}.  Moreover, we point out that as $\Delta\rightarrow0$, $\delta_{-}\rightarrow 0$ and $\delta_{+}\rightarrow1$, implying that $\HH(\delta_{-})$ converges to $\{V^*\}$ and $\II$ to the positive orthant of $\mathbb{R}^N$. On the contrary, as $\Delta \rightarrow 1$, $\delta_{-}\rightarrow \frac{1}{2}$ and $\delta_{+}\rightarrow \frac{1}{2}$, meaning that the set $\HH(\delta_{-})$ expands and the set $\II$ shrinks. We finally note that the set $\JJ$ defines a low-voltage set with no solutions under the condition $(1-\delta_{+})V^* <V_{ref}$.
\end{remark}

The previous theorem pertains to the existence of an equilibrium point for the closed-loop system \eqref{eq:globalstatespace}, albeit feeding only ZIP loads. A detailed analysis of \eqref{eq:equilibriumexistence} with $r_i \in \mathbb{R}$ is deferred to future research, as it calls for a study on finding analytic solutions to polynomials of generic order. For E loads, we will rely on the following assumption. 
\begin{assum}
	\label{assum:equilibriumpoint}
	A positive voltage solution $V\in \mathbb{R}^N_{>0}$ simultaneously satisfying equations \eqref{eq:eqpowerflow}-\eqref{eq:voltagebalancing} exists.
\end{assum}
\section{Stability of the DC microgrid network}
\label{sec:stability}
In this section, we aim to study the stability of the closed-loop system \eqref{eq:globalstatespace}, necessary in order for the DCmG to exhibit desired steady-state behavior investigated in Section \ref{sec:eq}. We start by introducing the following Lemma.
\begin{lemma}
	\label{lemma:blockmatrices}
	Consider a symmetric block matrix
	$$\mathcal{Z}=\begin{bmatrix}
	A &B\\
	B &D\\
	\end{bmatrix} \in \mathbb{R}^{2n\times 2n},$$ where $A,B$ and $D \in \mathbb{R}^{n\times n}$ are diagonal matrices. Assume that $D$ is invertible and define the matrices 
	$$Z_i=\begin{bmatrix}
	A_{i} &B_{i}\\
	B_{i} &D_{i}\\
	\end{bmatrix} \in \mathbb{R}^{2\times 2},$$ where $A_i,B_i,$ and $D_i$ represent the $i^{th}$ diagonal element of matrices $A,B$ and $C$, respectively. Matrix $\ZZ$ is positive definite if and only if $Z_i\succ 0$ for all $i \in \{1,\dots,n\}$. If at least one $Z_i$ is positive semidefinite, then $Z$ is positive semidefinite.
\end{lemma}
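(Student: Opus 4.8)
The plan is to exploit the fact that $A$, $B$, and $D$ are all diagonal, which makes the quadratic form associated with $\mathcal{Z}$ decouple completely. Writing $\mathbf{z}=[x^T,y^T]^T$ with $x,y\in\mathbb{R}^n$ the two halves of $\mathbf{z}$, diagonality gives
$$\mathbf{z}^T \mathcal{Z} \mathbf{z} = x^T A x + 2 x^T B y + y^T D y = \sum_{i=1}^n \begin{bmatrix} x_i & y_i \end{bmatrix} Z_i \begin{bmatrix} x_i \\ y_i \end{bmatrix}.$$
Equivalently, and more structurally, I would introduce the permutation matrix $P$ that reorders the coordinates from $(x_1,\dots,x_n,y_1,\dots,y_n)$ to $(x_1,y_1,\dots,x_n,y_n)$; because $A$, $B$, $D$ are diagonal, $P^T\mathcal{Z}P=\mathrm{blkdiag}(Z_1,\dots,Z_n)$. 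Since $P$ is orthogonal, this is simultaneously a similarity and a congruence transformation, so $\mathcal{Z}$ and the block-diagonal matrix share both their eigenvalues and their inertia. A preliminary remark is that symmetry of $\mathcal{Z}$ is automatic here, since the off-diagonal block satisfies $B=B^T$ by diagonality.

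With this reduction, both claims become consequences of the corresponding elementary facts for block-diagonal matrices. For the equivalence I would argue the two directions directly. If every $Z_i\succ 0$, each summand in the decoupled form is strictly positive whenever $(x_i,y_i)\neq 0$, so $\mathbf{z}^T\mathcal{Z}\mathbf{z}>0$ for all $\mathbf{z}\neq 0$, giving $\mathcal{Z}\succ 0$. Conversely, if $\mathcal{Z}\succ 0$, testing against vectors $\mathbf{z}$ supported only on the pair $(x_i,y_i)$ isolates the $i$-th summand and forces $Z_i\succ 0$; the block-diagonal viewpoint says the same thing, as a block-diagonal matrix is positive definite exactly when each diagonal block is.

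For the semidefinite statement, the same decoupling shows $\mathcal{Z}\succeq 0$ if and only if every $Z_i\succeq 0$, because the spectrum of $\mathcal{Z}$ is the union, with multiplicity, of the spectra of the $Z_i$. Read in its intended setting --- namely that all blocks are positive semidefinite, with at least one being merely semidefinite rather than definite --- the singular block contributes a zero eigenvalue, so $\mathcal{Z}$ inherits a zero eigenvalue and is therefore positive semidefinite but not positive definite.

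I do not expect a genuine obstacle here: once the permutation that block-diagonalizes $\mathcal{Z}$ is identified, everything reduces to the characterization of definiteness for block-diagonal matrices. The only point requiring care is the precise hypothesis behind the final assertion. The phrase ``at least one $Z_i$ positive semidefinite'' must be understood together with $\mathcal{Z}\succeq 0$, i.e. with all $Z_i\succeq 0$, since a single indefinite block would otherwise destroy semidefiniteness of $\mathcal{Z}$; I would state this standing assumption explicitly before drawing the conclusion.
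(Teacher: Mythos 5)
Your proof is correct, but it takes a genuinely different route from the paper. The paper argues via Schur complements: it reduces $\mathcal{Z}\succ 0$ to $D\succ 0$ and $A-BD^{-1}B\succ 0$, observes that diagonality lets these conditions split entrywise into $D_i>0$ and $A_i-B_iD_i^{-1}B_i>0$, and recognizes the latter as the Schur complement condition for $Z_i\succ 0$; the semidefinite claim is handled by noting that $\det(Z_i)=A_iD_i-B_i^2=0$ forces a zero diagonal entry in $A-BD^{-1}B$. Your permutation argument --- conjugating $\mathcal{Z}$ by the orthogonal matrix that interleaves the coordinates to obtain $\mathrm{blkdiag}(Z_1,\dots,Z_n)$ --- is more elementary and slightly stronger: it dispenses with the invertibility of $D$ (which the paper needs for the Schur complement), it yields the full spectrum of $\mathcal{Z}$ as the multiset union of the spectra of the $Z_i$, and it makes both directions of the equivalence immediate. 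Your closing remark is also well taken: the lemma's final sentence is only correct when read with the standing assumption that all blocks are positive semidefinite and at least one fails to be definite, which is indeed how the paper applies it to the matrices $\hat{\mathcal{Q}}_i$ in the proof of Theorem \ref{thm:stability}; the paper's own proof of that sentence tacitly makes the same assumption.
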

\begin{proof}
	The matrix Z is positive definite if and only if $D \succ 0$, and its Schur's complement $A-BD^{-1}B \succ 0$. Considering that $A,B,C,$ and $D$ are diagonal matrices, the aforementioned conditions translate into $D_{i} > 0$, and $A_{i}-B_{i}D^{-1}_{i}B_i \succ 0,~\forall i \in \{1,\dots,n\}$. Note that $A_{i}-B_{i}D^{-1}_{i}B_{i}$ is the Schur's complement of $Z_i$. Therefore, if $Z_i \succ 0~ \forall~i \in  \{1,\dots,n\}$, $\mathcal{Z} \succ 0$. If the $i^{th}$ $Z_i$ is positive semidefinite, then 
	$$\text{det}(Z_i)=A_{i}D_{i}-B_{i}B_{i}=0.$$
	Since $D_{ii}\neq 0$, $A_{i}-B_{i}D_{i}^{-1}B_{i}=0$. This implies that diagonal entry in the $i^{th}$ row of $A-BD^{-1}B$ is equal to zero, making the Schur's complement of $\ZZ$ positive semidefinite, and hence, $\ZZ$ positive semidefinte.
\end{proof}
\begin{theorem}
	\label{thm:stability}
	\textbf{(Stability of the closed-loop DCmG).}  Consider the closed-loop system \eqref{eq:globalstatespace}, along with Assumption \ref{assum:equilibriumpoint}. Define the equilibrium power absorption of the $i^{th}$ exponential load as $\bar{P}^*_{Li}=P^*_{Li}\bar{V}_i^{r_i}$. For $i \in \DD$, if the feedback gains $k_{1,i},~k_{2,i}$, and $k_{3,i}$ belong to the set 
	\begin{equation}
	\label{eq:explicitgains}
	\ZZ_{[i]}= \left\{ \begin{array}{l}
	k_{1,i}<1,\\
	k_{2,i}<R_{ti},\\
	0<k_{3,i}<\frac{1}{L_{ti}}(k_{1,i}-1)(k_{2,i}-R_{ti})
	\end{array} \right\},
	\end{equation} $k_{4,i}=k_{1,i}-1$, and the Z and E components of the ZIE load \eqref{eq:loaddynamics} with $r_i<1$ verify
	\begin{equation}
	\label{eq:maximumpowerconsumption}
	(1-r_i)\bar{P}^*_{Li}< Y_{Li}\bar{V}_{i}^2,
	\end{equation} then the following statements hold:
	\begin{enumerate}
		\item[1)] The equilibrium point $\bar{X}$ is locally asymptotically stable, and is globally asymptotically stable when $\bar{P}^*_L=0$; 
		\item[2)] In the absence of a communication network, the equilibrium point $\bar{X}'$ of the resulting closed-loop system \eqref{eq:globalstatespacesanscom} is locally asymptotically stable.  
	\end{enumerate}
\end{theorem}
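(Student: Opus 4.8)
The plan is to certify stability with an energy-based quadratic Lyapunov function in error coordinates, to reduce the decentralized gain conditions to a per-DGU diagonal-block positive-definiteness check settled by Lemma~\ref{lemma:blockmatrices}, and to close the argument with LaSalle's invariance principle. First I would shift the equilibrium to the origin via $\tilde{X}=X-\bar{X}$ and derive the error dynamics from \eqref{eq:globalstatespace}. The sole nonlinearity sits in the voltage equation through the exponential load, whose increment I would write as $\Delta I_{Li}=Y_{Li}\tilde{V}_i+P^*_{Li}(V_i^{r_i-1}-\bar{V}_i^{r_i-1})$; a Taylor/mean-value expansion near $\bar{V}_i$ yields an effective incremental conductance $Y_{Li}-(1-r_i)\bar{P}^*_{Li}/\bar{V}_i^2$, which the load condition \eqref{eq:maximumpowerconsumption} forces to be strictly positive (this is exactly the certificate against the destabilizing negative impedance). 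A key structural simplification comes from $k_{4,i}=k_{1,i}-1$: together with \eqref{eq:abg}--\eqref{eq:delta} this gives $\delta_i=\alpha_i$, so the filter-current and integrator error equations depend on $V_i$ and $\omega_i$ only through the combination $\tilde{V}_i+\tilde{\omega}_i$, decoupling the certificate from the particular equilibrium value of $\Omega$. Since $\mathcal{L}_c$ is singular with null space spanned by $\mathbf{1}_N$, the equilibrium $\bar{\Omega}$ is fixed only up to $\eta\mathbf{1}_N$; I would therefore phrase stability on the quotient, tracking the consensus disagreement through $\mathcal{L}_c\tilde{\Omega}$ (note $\dot{\tilde{\Omega}}=\mathcal{L}_c[I^s_{t}]^{-1}\tilde{I}_t$ since $\mathcal{L}_c[I^s_{t}]^{-1}\bar{I}_t=0$), which renders the equilibrium isolated in the reduced coordinates.

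Next I would take the energy-based candidate combining the physical storage $\frac{1}{2}\tilde{V}^T C_t\tilde{V}+\frac{1}{2}\tilde{I}_t^T L_t\tilde{I}_t+\frac{1}{2}\tilde{I}^T L\tilde{I}$ with suitably weighted quadratic terms in the integrator error $\tilde{v}$ and in $\tilde{\Omega}$. Differentiating along trajectories, the interconnection through $B$ is lossless: the capacitor cross term $-\tilde{V}^T B\tilde{I}$ cancels the line cross term $\tilde{I}^T B^T\tilde{V}$, leaving the resistive dissipation $-\tilde{I}^T R\tilde{I}$. The load increment contributes $-\tilde{V}^T\Delta I_L$, which the positivity of the effective conductance makes negative near the equilibrium, and the remaining per-DGU terms in $(\tilde{V}_i,\tilde{I}_{ti},\tilde{v}_i)$ are organized into a quadratic form whose sign is governed by the gain set $\ZZ_{[i]}$ of \eqref{eq:explicitgains}. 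I would check that $k_{1,i}<1$, $k_{2,i}<R_{ti}$ and the determinant-type bound $0<k_{3,i}<\frac{1}{L_{ti}}(k_{1,i}-1)(k_{2,i}-R_{ti})$ are precisely what makes the relevant $2\times2$ diagonal blocks positive definite, so that Lemma~\ref{lemma:blockmatrices} lifts them to positive definiteness of the assembled block matrix.

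I would then invoke LaSalle's invariance principle. Because $\dot{\mathcal{V}}$ is generally only negative semidefinite (the secondary variable enters through $\mathcal{L}_c$ and vanishes on $\mathbf{1}_N$), I would characterize the largest invariant set inside $\{\dot{\mathcal{V}}=0\}$ and show, using the line resistances and connectivity of the electrical and communication graphs, that it collapses to the equilibrium in the reduced coordinates, giving local asymptotic stability. For the global claim, $\bar{P}^*_L=0$ forces $P^*_L=0$ (as $\bar{V}>0$), the exponential term disappears, the error system becomes linear, and the same radially unbounded Lyapunov function certifies global asymptotic stability. Statement~2 is the communication-free specialization: setting $\Omega\equiv0$ in \eqref{eq:globalstatespacesanscom} deletes the consensus block and the $\omega$-coupling, the Lyapunov function restricts to its first three blocks, and the identical gain and load conditions yield local asymptotic stability of $\bar{X}'$.

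The main obstacle I anticipate is assembling $\dot{\mathcal{V}}$ into a genuinely sign-definite form despite the coupling among $\tilde{V}$, $\tilde{I}_t$, $\tilde{v}$ and $\tilde{\omega}$: the Lyapunov weights must be chosen so that the off-diagonal $V$--$I_t$ and $I_t$--$v$ cross terms are dominated, and this is exactly where the decentralized gains $\ZZ_{[i]}$, the simplification $\delta_i=\alpha_i$, and Lemma~\ref{lemma:blockmatrices} must conspire. A secondary difficulty is the nonlinearity: positivity of the incremental conductance is an equilibrium property, so extending negativity of $-\tilde{V}^T\Delta I_L$ to a neighborhood (and thus obtaining local rather than merely linearized stability) needs a continuity argument, which is why the global conclusion is only available in the linear case $\bar{P}^*_L=0$.
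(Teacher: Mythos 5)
Your overall strategy --- a quadratic Lyapunov function in error coordinates, per-DGU $2\times 2$ positive-(semi)definiteness checks lifted by Lemma~\ref{lemma:blockmatrices}, positivity of the incremental load conductance $Y_{Li}-(1-r_i)\bar{P}^*_{Li}/\bar{V}_i^2$ extracted from \eqref{eq:maximumpowerconsumption} and extended to a neighborhood by continuity, and LaSalle to dispose of the semidefinite directions --- is exactly the paper's. You also correctly identify the two structural facts the paper exploits: $k_{4,i}=k_{1,i}-1$ gives $\delta_i=\alpha_i$ (this is what ultimately removes $\Omega$ from the dissipation inequality), and the $C_t$/$L$ weights make the $B$-interconnection lossless. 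Your quotient treatment of the $\mathbf{1}_N$-indeterminacy of $\bar{\Omega}$ is a legitimate, arguably cleaner, alternative to the paper's explicit invariant-set computation in the appendix.

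The gap is precisely in the one ingredient you leave ``to be chosen'': the weighting of the $(\tilde{I}_t,\tilde{v})$ block. The candidate you actually name --- physical storage $\tfrac{1}{2}\tilde{I}_t^T L_t\tilde{I}_t$ plus a (diagonal) quadratic term in $\tilde{v}$ --- does not work. With that choice the $\tilde{V}$--$\tilde{I}_t$ cross term in $\dot{\VV}$ is $k_{1,i}\tilde{V}_i\tilde{I}_{ti}$ (the capacitor term contributes $+1$ and $\tilde{I}_t^T L_t[\alpha]\tilde{V}$ contributes $k_{1,i}-1$), and dominating it requires $4(Y_{Li}+Y_{Ei})(R_{ti}-k_{2,i})\geq k_{1,i}^2$, which is \emph{not} implied by \eqref{eq:explicitgains}: under \eqref{eq:maximumpowerconsumption} the effective conductance $Y_{Li}+Y_{Ei}$ may be arbitrarily close to zero. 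The paper's proof hinges on the non-obvious matrix $\PP$ in \eqref{eq:globallyapunovfunction}, whose $(\tilde{I}_t,\tilde{v})$ block is the \emph{cross-coupled} matrix with entries $[\beta][\omega]^{-1}$, $[\gamma][\omega]^{-1}$, $[\alpha][\gamma][\omega]^{-1}$, where $[\omega]=[\gamma]-[\alpha][\beta]$. It is exactly the off-diagonal weight $[\gamma][\omega]^{-1}$ that makes the $\tilde{V}$--$\tilde{I}_t$ entry of $\QQ(V)$ equal to $-\mathbf{I}+([\gamma]-[\alpha][\beta])[\omega]^{-1}=\mathbf{0}$ and, together with $[\delta]=[\alpha]$, annihilates every $\Omega$-coupling in \eqref{eq:Qmatty}, so that $\dot{\VV}$ collapses to a block-diagonal negative-semidefinite form requiring no domination argument at all. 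Without exhibiting this $\PP$ (or an equivalent one), the ``conspiracy'' you appeal to between $\ZZ_{[i]}$, $\delta_i=\alpha_i$ and Lemma~\ref{lemma:blockmatrices} does not materialize, and the decentralized conditions \eqref{eq:explicitgains} cannot be shown sufficient. The remainder of your outline (secant conductance, level-set localization, LaSalle, global asymptotic stability for $P_L^*=0$, and restriction to $\PP_1$ --- which must retain the line-current block $L$, not only the first three state blocks --- for Statement~2) matches the paper once this matrix is in hand.
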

\begin{proof}
	\textit{Statement 1):} To study the behavior of trajectories resulting from \eqref{eq:globalstatespace}, consider the following candidate Lyapunov function, attaining a minimum at $\bar{X}$
	\begin{equation}
	\label{eq:globalLyapunov}
	\VV({\tilde{X}})=\frac{1}{2}{\tilde{X}}^T\mathcal{P}{\tilde{X}}, 
	\end{equation}
	where $\tilde{X}=X-\bar{X}$. The matrix $\PP$ is defined as
	\begin{equation}
	\begin{aligned}
	\label{eq:globallyapunovfunction}
	\mathcal{P}&=\left[ \begin{array}{c|c}
	\PP_1  &\textbf{0} \\
	\hline
	\textbf{0}& \PP_2
	\end{array}\right]\\&=\left[ \begin{array}{cccc|c}
	C_t & \textbf{0} & \textbf{0} & \textbf{0} &\textbf{0} \\
	\textbf{0} &[\beta][\omega]^{-1} &[\gamma][\omega]^{-1}&\textbf{0} &\textbf{0}\\
	\textbf{0} &[\gamma][\omega]^{-1} &[\alpha][\gamma][\omega]^{-1} &\textbf{0} &\textbf{0}\\
	\textbf{0} & \textbf{0} & \textbf{0} &L &\textbf{0} \\
	\hline
	\textbf{0} & \textbf{0} &\textbf{0} &\textbf{0} &\textbf{I}\\
	\end{array}\right],
	\end{aligned}
	\end{equation}
	with $\PP_1\in\mathbb{R}^{(3N+M) \times (3N+M)}$, $\PP_2\in\mathbb{R}^{N \times N}$, and $[\omega]=[\gamma]-[\alpha][\beta]\in \mathbb{R}^{N \times N}$. To ensure \eqref{eq:globalLyapunov} is a legitimate Lyapunov function, the matrix $\PP$ must be positive definite. In fact, as $\PP$ is a block diagonal matrix with $C_t\succ 0$, $L\succ 0$, and $I\succ 0$, its positive definiteness hinges on 
	$$
	\hat{\PP}=
	\begin{bmatrix}
	[\beta][\omega]^{-1} &[\gamma][\omega]^{-1}\\
	[\gamma][\omega]^{-1} &[\alpha][\gamma][\omega]^{-1}
	\end{bmatrix}\succ 0.
	$$
	which, as a direct consequence of Lemma \ref{lemma:blockmatrices}, translates into 
	$$
	\hat{\PP}_i=
	\begin{bmatrix}
	\frac{\beta_i}{\omega_i} & \frac{\gamma_i}{\omega_i}\\
	\frac{\gamma_i}{\omega_i} & \frac{\alpha_i\gamma_i}{\omega_i}
	\end{bmatrix}\succ 0, \forall i \in \DD.
	$$
	Using Sylvester's criterion \cite[Theorem 7.2.5]{Horn} followed by some basic algebra, one can deduce that $\hat{\PP}_i\succ 0$ if and only if $\beta_i,\gamma_i,\omega_i$ belong to the set $$\mathcal{S}_i=\{(\beta_i,\gamma_i,\omega_i):(\beta_i, \omega_i>0, \gamma_i<0)~\text{or}~(\beta_i, \omega_i<0, \gamma_i>0)\}.$$
The time derivative of \eqref{eq:globallyapunovfunction} along the solutions of \eqref{eq:globalstatespace} reads
	\begin{equation}
	\label{eq:lyapderivative}
	\begin{split}
	\dot{\VV}{(\tilde{X})}&=\left(\dfrac{\partial\VV}{\partial\tilde{X}}\right)^T\dot{\tilde{X}}\\
	&=\frac{1}{2}\left({X}^T\mathcal{A}^T\mathcal{P}\tilde{X}+\tilde{X}^T\mathcal{P}\mathcal{A}X\right)+\BB(V)^T\PP\tilde{X}\\
	&=\frac{1}{2}\left(\tilde{X}^T\mathcal{A}^T\mathcal{P}\tilde{X}+\tilde{X}^T\mathcal{P}\mathcal{A}\tilde{X}\right)+\left(\mathcal{A}\bar{X}+\BB(V)\right)^T\PP\tilde{X}\\
	&={\tilde{X}}^T{\QQ}(V)\tilde{X}
	\end{split},
	\end{equation}
	where ${\QQ}(V)$ is defined in \eqref{eq:Qmatty}, and 
\begin{figure*}[b]
\hrule
	\begin{equation}
	\label{eq:Qmatty}{\QQ}(V)=-\frac{1}{2}\begin{bmatrix}
	2(Y_L+Y_E(V)) &\begin{split}&-\textbf{I}+[\gamma][\omega]^{-1}\\&-[\alpha][\beta][\omega]^{-1}\end{split} &\textbf{0} &\textbf{0} &\textbf{0}\\
	\begin{split}&-\textbf{I}+[\gamma][\omega]^{-1}\\&-[\alpha][\beta][\omega]^{-1}\end{split} &-2[\beta]^2[\omega]^{-1} &-2[\beta][\gamma][\omega]^{-1}  &\textbf{0} &\begin{split}&[I^s_t]^{-1}\left(-\textbf{I}+[\gamma][\omega]^{-1}\right.\\&\left.-[\alpha][\beta][\omega]^{-1}\right)\LL_c\end{split} \\
	\textbf{0} &-2[\beta][\gamma][\omega]^{-1}  &-2[\gamma]^2[\omega]^{-1}  &\textbf{0} &\begin{split}&[I^s_t]^{-1}\left([\alpha][\gamma][\omega]^{-1}\right.\\&\left.-[\delta][\gamma][\omega]^{-1}\right)\LL_c\end{split}\\
	\textbf{0} &\textbf{0} &\textbf{0} &2R &\textbf{0}\\
	\textbf{0} &\begin{split}&\LL_c\left(-\textbf{I}+[\gamma][\omega]^{-1}\right.\\&\left.-[\alpha][\beta][\omega]^{-1}\right)[I^s_t]^{-1}\end{split} &\begin{split}&\LL_c\left([\alpha][\gamma][\omega]^{-1}\right.\\&\left.-[\delta][\gamma][\omega]^{-1}\right)[I^s_t]^{-1}\end{split}  &\textbf{0} &\textbf{0}
	\end{bmatrix},
	\end{equation}
\end{figure*}
and $Y_E(V)$ is a diagonal matrix, whose $i^{th}$ diagonal element is 
	\begin{equation}
	Y_{Ei}(V_i)=\frac{\bar{P}_{Li}^*(V_i^{r_i-1}-\bar{V}_i^{r_i-1})}{\bar{V}_i^{r_i}(V_i-\bar{V}_i)}.
	\end{equation}
	Using \eqref{eq:abg} and \eqref{eq:delta}, one can simplify $\QQ(V)$ as 
		\begin{small}
	$${\QQ}(V)=-\begin{bmatrix}
	Y_L+Y_E(V) &\textbf{0} &\textbf{0} &\textbf{0} &\textbf{0}\\
	\textbf{0} &-[\beta]^2[\omega]^{-1} &-[\beta][\gamma][\omega]^{-1}  &\textbf{0} &\textbf{0} \\
	\textbf{0} &-[\beta][\gamma][\omega]^{-1}  &-[\gamma]^2[\omega]^{-1}  &\textbf{0} &\textbf{0}\\
	\textbf{0} &\textbf{0} &\textbf{0} &R &\textbf{0}\\
	\textbf{0} &\textbf{0} &\textbf{0}  &\textbf{0} &\textbf{0}
	\end{bmatrix},$$ 
		\end{small}
	To claim that $	\dot{\VV}{(\tilde{X})}\leq 0$, and subsequently the stability of the equilibrium point $\bar{X}$, one needs
	\begin{equation}
	\label{eq:timevaryingimpedance}
	f_i(V_i)=Y_{Li}+Y_{Ei}(V_i) \geq 0,~\forall i \in \DD\end{equation}
	and, from Lemma \ref{lemma:blockmatrices}, 
	$$\hat{\QQ}_i=\begin{bmatrix}
	-\frac{\beta_i^2}{\omega_i} &-\frac{\beta_i\gamma_i}{\omega_i}\\
	-\frac{\beta_i\gamma_i}{\omega_i}  &-\frac{\gamma_i^2}{\omega_i}
	\end{bmatrix}\succeq 0, \forall i \in \DD.$$
	Evidently, $\hat{\QQ}_i\succeq 0$ if and only if $\omega_i$ belongs to 
	$$\mathcal{T}_i=\{\omega_i:\omega_i<0\}.$$
	Assume for the moment that \eqref{eq:timevaryingimpedance} holds. For $\dot{\VV}{(\tilde{X})}\leq 0$ and $\VV({\tilde{X}})>0$ to be verified simultaneously, $\alpha_i$, $\beta_i$, and $\gamma_i$ should be such that $(\beta_i,\gamma_i,\omega_i)\in\mathcal{S}_i$, and $\omega_i\in\mathcal{T}_i$. Equivalently, $(\alpha_i,\beta_i,\gamma_i)$ must belong to
	\begin{equation}
	\label{eq:explicitgains1}
	\begin{split}
	\UU_{i}
	&=\{(\alpha_i,\beta_i,\gamma_i) : \alpha_i<0, \beta_i<0,0<\gamma_i<\alpha_i\beta_i\}
	\end{split}.
	\end{equation}
	Using \eqref{eq:abg}, one can rewrite set $\UU_{i}$ in terms of $k_{1,i}$, $k_{2,i}$, and $k_{3,i}$ as \eqref{eq:explicitgains}. Now, as for \eqref{eq:timevaryingimpedance}, it is state dependent and should, at least, hold at $V_i=\bar{V}_i$. Note that $f_i(V_i)$ has a finite limit for $V_i\to\bar{V}_i$, which one can show by employing Bernoulli-Hospital theorem as
	\begin{equation}
	\label{eq:continuity}
	\begin{split}
	f_i(\bar{V}_i)&=\lim_{V_i\rightarrow\bar{V}_i}f_i({V}_i)\\
	&=Y_{Li}+\lim_{V_i\rightarrow\bar{V}_i}\frac{\bar{P}_{Li}^*(V_i^{r_i-1}-\bar{V}_i^{r_i-1})}{\bar{V}_i^{r_i}(V_i-\bar{V}_i)}\\
	&=Y_{Li}-\frac{\bar{P}^*_{Li}}{\bar{V}_i^{2}}(1-r_i)
	\end{split}.
	\end{equation}
	In view of \eqref{eq:continuity}, if \eqref{eq:maximumpowerconsumption} is verified by all the ZIE loads with $r_i<1$, then the inequality \eqref{eq:timevaryingimpedance} holds in a neighborhood of $X=\bar{X}$. Note that \eqref{eq:maximumpowerconsumption} is always satisfied when $r_i\geq 1$. We can now state that a compact level set $\MM$ of $\VV({\tilde{X}})$ can be taken sufficiently small such that it is contained in the neighborhood within which \eqref{eq:timevaryingimpedance} holds. As a result, if $X(0)-\bar{X}\in \MM$, then $X-\bar{X}\in \MM$ for all $t\geq 0$. To show local asymptotic stability, one can exploit the standard LaSalle's invariance principle and show that the largest invariant set $M\subset \MM$ contains solely the equilibrium point $\bar{X}$. A detailed computation of $M$ is skipped here and presented in Appendix \ref{appendixA}.\\
	We point out that when $P_L^*=0$, one can call into use Theorem \ref{thm:existence} and Lemma \ref{lem:equilibriumchar} to establish existence and uniqueness of $\bar{X}$. Moreover, \eqref{eq:timevaryingimpedance} holds for all $X\in \mathbb{R}^{4N\times M}$. Hence, $\bar{X}$ is globally asymptotically stable.

	\textit{Statement 2):} This proof relies heavily on the preceding analysis. Consequently, instead of providing a detailed proof, we sketch the proof of Statement 2). In the absence of a communication network, the DCmG dynamics given by \eqref{eq:globalstatespacesanscom} admit a unique equilibrium; see \eqref{eq:equilibriumstatesanscom}. To show the asymptotic stability of $\bar{X}'$, consider the following Lyapunov function 	
	\begin{equation}
	\VV(\tilde{X}')=\frac{1}{2}\tilde{X}'^{T}\mathcal{P}_1{\tilde{X}'}, 
	\end{equation}
	where $\tilde{X}'=X-\bar{X}'$. One can now trace the same steps as before to reach the conclusion.
\end{proof}
\begin{remark}\textbf{(Power consumption of E loads and stability).} Based on Theorem \ref{thm:stability}, the permissible power drawn by E loads with $r_i<1$ is restricted by \eqref{eq:maximumpowerconsumption}. To make plain sense out of \eqref{eq:maximumpowerconsumption}, one can state that, just like P loads \cite{Ashourloo, Mingfei}, E loads with $r_i<1$ exhibit a negative incremental admittance ($dI/dV < 0$; see \eqref{eq:loaddynamics}) having a destabilizing impact. To preserve stability of the network, the DCmG operator needs to counter this negative damping with the positive damping of Z loads, constraining the power consumption of E loads with $r_i<1$. 	Indeed, for E loads with $r_i<1$, stability cannot be guaranteed in the absence of Z loads. Note that no upper limit exists for E loads with $r_i>1$ as  \eqref{eq:maximumpowerconsumption} is always fulfilled. 
\end{remark}
\begin{remark}\textbf{(Compatibility with primary control and stability under a communication collapse).} Equations \eqref{eq:powerline} and \eqref{eq:DGUdynamicsupdated} represent the DCmG under primary control when the secondary layer is inactive. As shown in Theorem \ref{thm:stability}, \eqref{eq:explicitgains} and \eqref{eq:maximumpowerconsumption} also make possible the design of stabilizing primary controllers, allowing us to reach the following conclusions: (i) the proposed secondary controllers are design-wise fully compatible with the primary layer, and require only an additional control gain $k_{4,i}=k_{1,i}-1, k_{1,i} \in \ZZ_{[i]}$ be set once activated; (ii) if the DCmG undergoes a communication collapse, the primary controllers maintain voltage stability without any human intervention, forcing each PC to track $V_{ref,i}$ in steady state. 
\end{remark}
\begin{remark}\textbf{(Plug-and-play nature of control).} To synthesize the proposed  secondary regulators, one can use the feedback gains \eqref{eq:explicitgains}, dependent on the DGU filter parameters $R_{ti}$ and $L_{ti}$ but not on $C_{ti}$ --- assumed to be lumped with line capacitances.  Capable of taking into account the worst-case parameter variations around a nominal value, these explicit inequalities  \eqref{eq:explicitgains} cause the entire control design to be robust to uncertainties in filter parameters. Moreover, the secondary controllers, notwithstanding their distributed structure, can be designed in a completely decentralized fashion, enabling plug-and-play operations. For example, when a new DGU is plugged-in, its controller can be designed without the knowledge of any other parameter of the microgrid, and no other controller in the microgrid needs to be updated such that voltage stability is preserved. As a last comment, we note that if a power line with non-negligible capacitance is added or removed, no DGU controller needs to be updated as controller gains are independent of $C_{ti}$ and hence the capacitance of lines. \end{remark}
\section{Simulations}
\label{sec:simulations} 

\begin{figure}[h!]
	\centering
	\ctikzset{bipoles/length=1.2cm}
	\tikzstyle{every node}=[font=\tiny]
	\vspace{-0.1cm}
	\begin{tikzpicture}[scale=0.4]
		\tikzstyle{DGU} = [circle, draw, double, align=center, fill=red!20, draw=red]
		
		\draw (-1.5,9) node(D1) [DGU]  {\textbf{DGU 1} \\ \textbf{Load 1}};
		\draw (-7.5,3) node(D2) [DGU]  {\textbf{DGU 2}\\  \textbf{Load 2}};
		\draw (4.5,3) node(D3) [DGU]  {\textbf{DGU 3} \\ \textbf{Load 3}};
		\draw (-1.5,-3)  node(D4) [DGU]  {\textbf{DGU 4} \\ \textbf{Load 4}};
		\draw (9,-3)node(D5) [DGU]  {\textbf{DGU 5} \\ \textbf{Load 5}};
		\draw (9,9)node(D6) [DGU]  {\textbf{DGU 6} \\ \textbf{Load 6}};
		
		\tikzstyle{every node}=[font=\tiny]		
		\path [-latex,thick] (D1.south east) edge [ bend left=0] node[below, sloped] {$l_2$} (D3.north west);	
		\path [-latex,thick] (D1.south west) edge [ bend right=0] node[below, sloped] {$l_1$} (D2.north east);	
		\path [-latex,thick] (D1.east) edge [ bend left=0] node[below, sloped] {$l_3$} (D6.west);	
		\path [-latex,thick] (D2.south east) edge [ bend right=0] node[above, sloped] {$l_4$} (D4.north west);	
		\path [-latex,thick] (D3.south west) edge [ bend left=0] node[above, sloped] {$l_5$} (D4.north east);	
		\path [-latex,thick] (D4.east) edge [ bend right=0] node[below, sloped] {$l_6$} (D5.west);	
		\path [-latex,thick] (D5.north) edge [ bend right=0] node[below, sloped] {$l_7$} (D6.south);	
		\path[dashed, very thick, blue] (D1) edge [ bend left=0] (D4);	
		\path[dashed, very thick, blue] (D1) edge [ bend left=15] (D3);	
		\path[dashed, very thick, blue] (D2) edge [ bend left=0]  (D3);	
		\path[dashed, very thick, blue] (D3) [ bend left=15] edge  (D4);	
		\path[dashed, very thick, blue] (D3) [ bend right=15] edge  (D5);
		\path[dashed, very thick, blue] (D3) [ bend left=15] edge  (D6);		
		
	\end{tikzpicture}
	\caption{A representative diagram of the DCmG  with the communication network appearing in dashed blue.}
	\label{fig:mG_simulation}
\end{figure}
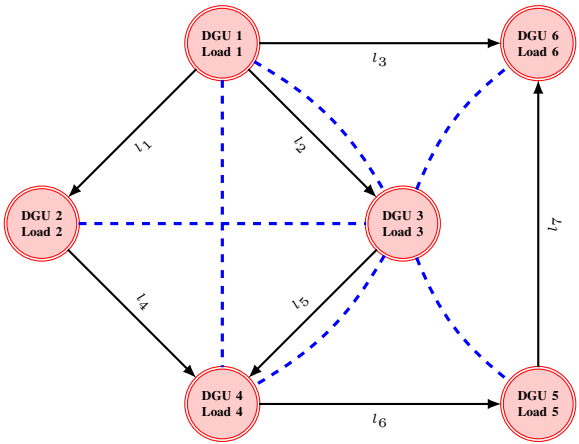

The proposed consensus-based controller is evaluated via realistic computer simulations using the \textit{Specialized Power Systems Toolbox} of Simulink. The considered DCmG has 6 DGUs arranged in the topology given in Figure~\ref{fig:mG_simulation}, where electrical lines depicted in solid black arrows are assigned arbitrary directions\footnote{Arrows define a reference frame for positive currents.} and bidirectional communication channels are shown in blue dashed lines. We further assume that power lines are equipped with switches so as to enable or interrupt power transfer. The DGUs consist of bidirectional Buck converters fed by source voltages of $V_{s,i} = 80V, \forall i \in \mathcal{N}$ as well as RLC filters and loads with non-identical parameter values. Bidirectional Buck converters are implemented as non-ideal insulated gate bipolar transistor (IGBT) switches that operate at $15$ kHz and have snubber circuits as a safeguard against large transients that can damage electrical equipments. The parameters of filters and lines are adopted from~\cite{Tucci2016independent}, whereas those of the loads are selected so as to satisfy \eqref{eq:maximumpowerconsumption}. Voltage reference values $V_{ref,i}$ are chosen to be between $45$V and $50$V, and the primary controller gains $k_{1,i}$, $k_{2,i}$, and $k_{3,i}$ are selected from the set $\mathcal{Z}_{[i]}$ in \eqref{eq:explicitgains}. 

The simulations are divided into two parts. We first present a simulation scenario verifying that the above DCmG with ZIP loads and proposed controller structure converges to the unique solution in $\mathcal{H}(\delta_{-})\cup\mathcal{I}$; see \eqref{eq:Criticalpower}, \eqref{eq:existenceset}. We then change some ZIP loads to ZIE loads to show that stability and secondary control objectives are still achieved despite the results of Theorem~\ref{thm:existence} being no longer applicable. In the second part of simulations, we show that an equilibrium fails to exist when some parameters of the DCmG equipped with ZIP loads are modified such that conditions for Theorem~\ref{thm:existence} are not satisfied. 

\subsection{Convergence to an equilibrium}
\label{subsec:SimulationConvergence}

In this scenario, we show that the proposed controller achieves current sharing and voltage balancing while allowing for plugging-in and unplugging of DGUs.

\begin{figure*}[t!]
	\definecolor{mycolor1}{rgb}{0.00000,0.44700,0.74100}%
	\definecolor{mycolor2}{rgb}{0.85000,0.32500,0.09800}%
	\definecolor{mycolor3}{rgb}{0.92900,0.69400,0.12500}%
	\definecolor{mycolor4}{rgb}{0.49400,0.18400,0.55600}%
	\definecolor{mycolor5}{rgb}{0.46600,0.67400,0.18800}%
	\definecolor{mycolor6}{rgb}{0.30100,0.74500,0.93300}%
	\centering
	\begin{tikzpicture} 
    \begin{axis}[%
    hide axis,
    xmin=10,
    xmax=50,
    ymin=0,
    ymax=0.4,
    legend style={legend columns=6, legend cell align=left, align=left, draw=white!15!black}
    ]
    \addlegendimage{mycolor1,line width = 1.5}
    \addlegendentry{\small{ $V_1$}};
    \addlegendimage{mycolor2,line width = 1.5}
    \addlegendentry{\small{ $V_2$}};
    \addlegendimage{mycolor3,line width = 1.5}
     \addlegendentry{\small{ $V_3$}};
     \addlegendimage{mycolor4,line width = 1.5}
    \addlegendentry{\small{ $V_4$}};
    \addlegendimage{mycolor5,line width = 1.5}
    \addlegendentry{\small{ $V_5$}};
    \addlegendimage{mycolor6,line width = 1.5}
     \addlegendentry{\small{ $V_6$}};
    \end{axis}
	\end{tikzpicture}
	\begin{tikzpicture} 
    \begin{axis}[%
    hide axis,
    xmin=10,
    xmax=50,
    ymin=0,
    ymax=0.4,
    legend style={legend columns=6, legend cell align=left, align=left, draw=white!15!black}
    ]
    \addlegendimage{mycolor1,line width = 1.5}
    \addlegendentry{\small{ $\frac{I_{t1}}{{I}^s_{t1}}$}};
    \addlegendimage{mycolor2,line width = 1.5}
    \addlegendentry{\small{$\frac{I_{t2}}{{I}^s_{t2}}$}};
    \addlegendimage{mycolor3,line width = 1.5}
     \addlegendentry{\small{$\frac{I_{t3}}{{I}^s_{t3}}$}};
     \addlegendimage{mycolor4,line width = 1.5}
    \addlegendentry{\small{$\frac{I_{t4}}{{I}^s_{t4}}$}};
    \addlegendimage{mycolor5,line width = 1.5}
    \addlegendentry{\small{$\frac{I_{t5}}{{I}^s_{t5}}$}};
    \addlegendimage{mycolor6,line width = 1.5}
     \addlegendentry{\small{$\frac{I_{t6}}{{I}^s_{t6}}$}};
    \end{axis}
\end{tikzpicture}\\
\definecolor{mycolor1}{rgb}{0.00000,0.44700,0.74100}%
\definecolor{mycolor2}{rgb}{0.85000,0.32500,0.09800}%
	\begin{tikzpicture} 
    \begin{axis}[%
    hide axis,
    xmin=10,
    xmax=50,
    ymin=0,
    ymax=0.4,
    legend style={legend columns=2, legend cell align=left, align=left, draw=white!15!black}
    ]
    \addlegendimage{mycolor1,dashed,line width = 1.5}
    \addlegendentry{\small{$\sum_i (V_{ref,i}-\omega_i)I^s_{ti}$}};
    \addlegendimage{mycolor2,line width = 1.5}
    \addlegendentry{\small{$\sum_i V_iI^s_{ti}$}};
    \end{axis}
\end{tikzpicture}\\
	\begin{subfigure}[t]{0.32\textwidth}
		\centering
		\includegraphics{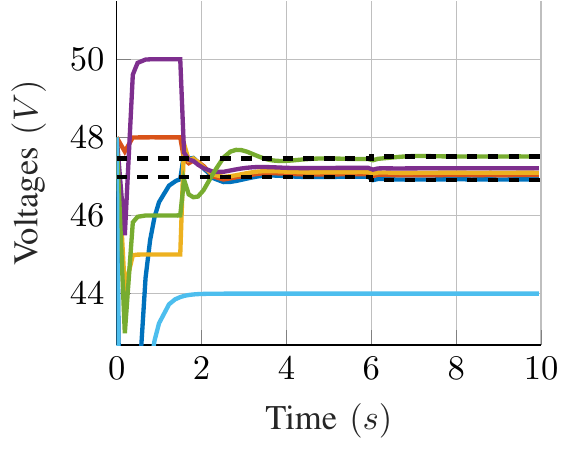}
		\caption{}
		\label{fig:ZIP_voltages}
	\end{subfigure}
	\begin{subfigure}[t]{0.32\textwidth}
		\centering
		\includegraphics{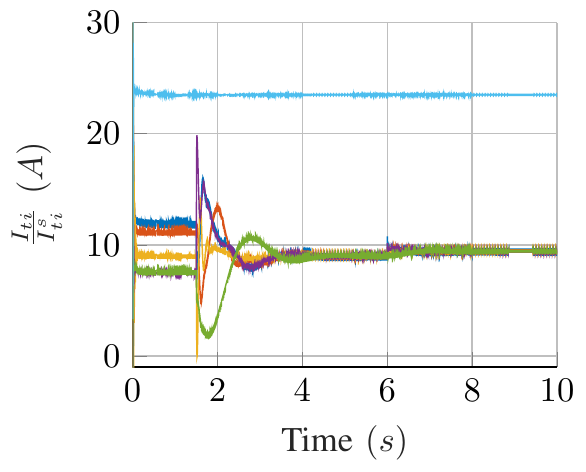}
		\caption{}
		\label{fig:ZIP_currents}
	\end{subfigure}
	\begin{subfigure}[t]{0.32\textwidth}
		\centering
		\includegraphics{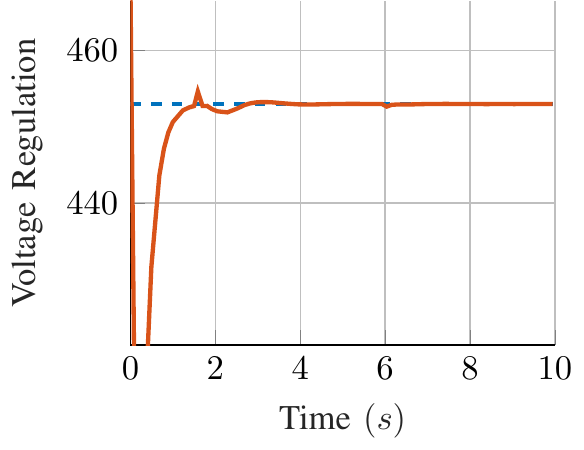}
		\caption{}
		\label{fig:ZIP_voltageregulation}
	\end{subfigure}
	\caption{PC voltages, weighted filter currents in \textit{per unit}, and weighted voltage sum under secondary control with ZIP loads. In (a), the black dashed lines represent the highest and lowest voltage values in $\mathcal{H}(\delta_{-})$, the set in which the unique equilibrium in $\mathcal{H}(\delta_{-})\cup\mathcal{I}$ lies.}
	\label{fig:ZIP}
\end{figure*}

\textbf{\textit{Initialization of the DCmG:}} First, the DCmG is initialized with all power lines and communication channels disconnected, i.e., there is no power transfer between the DGUs and the consensus-based controller is not activated. As such, the primary controllers of DGUs first regulate voltages at their PCs to corresponding reference voltages $V_{ref,i}$, as seen in Figure~\ref{fig:ZIP_voltages}. At this stage, DGUs $1$-$5$ supply ZIP loads, whereas DGU $6$ has a ZIE load with exponent $r_6 =0.65$. 

\textbf{\textit{Connection of DGUs:}} At $t=1.5s$, the switches on the power lines $l_1$, $l_2$, $l_4$, $l_5$, and $l_6$ are closed, connecting the DGUs $1$-$5$ to form a DCmG. Simultaneously, the consensus-based controller is activated with zero initial conditions for these DGUs. We would like to emphasize that, in this phase of the simulations, DGU $6$ is still disconnected from the rest of the DCmG. Theorems~\ref{thm:existence} and \ref{thm:stability} can be applied for this DCmG to conclude that there exists a unique equilibrium point for the PC voltages in $\mathcal{H}(\delta_{-})=\{V\in \mathbb{R}^5 |  (1-\delta_{-})V^*\leq V \leq(1+\delta_{-})V^*\}$, where $\delta_{-}=3.94\times 10^{-4}$ and $V^* = \left[47.15, 47.17, 47.18, 47.21, 47.26\right]^\top$. Moreover, this equilibrium point is the unique equilibrium point in $\mathcal{H}(\delta_{-})\cup \mathcal{I} = \{V\in \mathbb{R}^5 |  V\geq (1-\delta_{+})V^*\}$ with $\delta_{+}=0.9996$, and it is stable. Figure~\ref{fig:ZIP_voltages} shows that the PC voltages indeed converge to this unique equilibrium point in $\mathcal{H}(\delta_{-})$. Moreover, Figures~\ref{fig:ZIP_currents} and \ref{fig:ZIP_voltageregulation} respectively present that current sharing is achieved and voltages $V_i$ are successfully regulated to the references $V_{ref,i}-\omega_i$. 

\textbf{\textit{Change of ZIP loads:}} At $t=6s$, the ZIP loads in DGUs $1$ and $4$ are modified to increase their constant-impedance and constant-power loads, while still satisfying the conditions for Theorems~\ref{thm:existence} and \ref{thm:stability}. Consequently, it is guaranteed that a unique and stable equilibrium in $\mathcal{H}(\delta_{-})$ exists with $\delta_{-}=5.32\times 10^{-4}$ and $V^* = \left[47.13, 47.16, 47.17, 47.20, 47.28\right]^\top$. Furthermore, this is the unique equilibrium point in $\mathcal{H}(\delta_{-})\cup \mathcal{I}$ with $\delta_{+}=0.9995$. It can be seen in Figure~\ref{fig:ZIP} that voltages converge to this new equilibrium point with modified ZIP loads, as well as that current sharing and voltage regulation are achieved.
\begin{figure*}[t]
	\definecolor{mycolor1}{rgb}{0.00000,0.44700,0.74100}%
	\definecolor{mycolor2}{rgb}{0.85000,0.32500,0.09800}%
	\definecolor{mycolor3}{rgb}{0.92900,0.69400,0.12500}%
	\definecolor{mycolor4}{rgb}{0.49400,0.18400,0.55600}%
	\definecolor{mycolor5}{rgb}{0.46600,0.67400,0.18800}%
	\definecolor{mycolor6}{rgb}{0.30100,0.74500,0.93300}%
	\centering
	\setlength\fheight{3.5cm} 
	\setlength\fwidth{0.25\textwidth}
	\begin{subfigure}[t]{0.32\textwidth}
		\centering
		\includegraphics{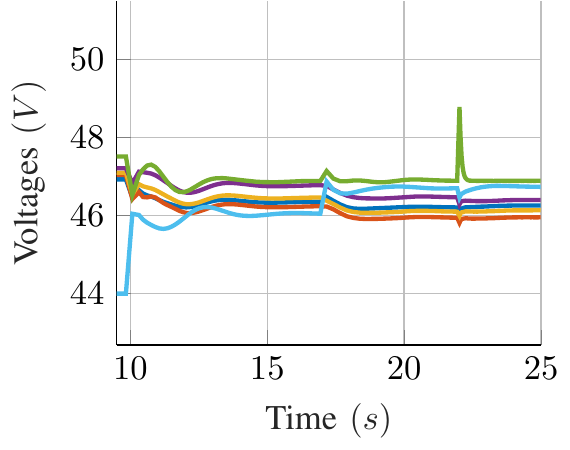}
		\caption{}
		\label{fig:ZIE_voltages}
	\end{subfigure}
	\begin{subfigure}[t]{0.32\textwidth}
		\centering
		\includegraphics{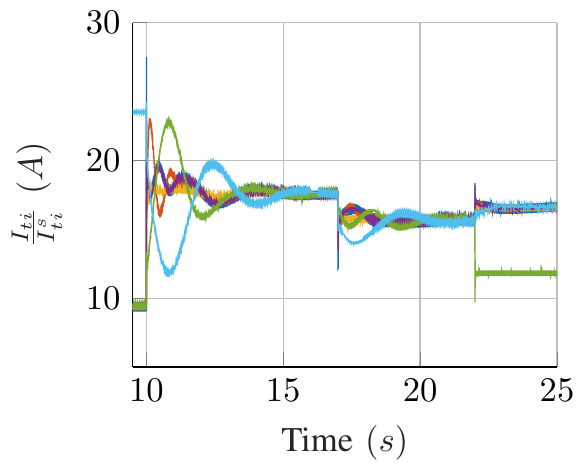}
		\caption{}
		\label{fig:ZIE_currents}
	\end{subfigure}
	\begin{subfigure}[t]{0.32\textwidth}
		\centering
		\includegraphics{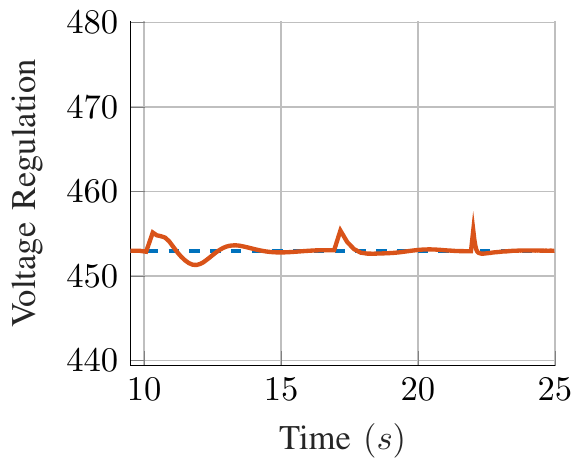}
		\caption{}
		\label{fig:ZIE_voltageregulation}
	\end{subfigure}
	\caption{PC voltages, weighted filter currents in \textit{per unit}, and weighted voltage sum under secondary control with ZIE loads.}
	\label{fig:ZIE}
\end{figure*}

\textbf{\textit{Plug-in of DGU $6$:}} At $t=10s$, the physical lines $l_3$ and $l_7$ are attached to connect the DGU $6$ to the rest of the DCmG. At the same time, the secondary controller of DGU $6$ is activated and those of DGUs $3$ and $5$ are updated to account for the communication from DGU $6$. Simultaneously, the constant-power loads of DGUs $2$, $3$, and $5$ are changed to exponential loads with exponents $r_2 = 0.6$, $r_3 = 0.55$, and $r_5 = 0.4$. Due to the existence of ZIE loads in the DCmG, Theorem~\ref{thm:existence} cannot be applied for this new DCmG; however, Theorem~\ref{thm:stability} can be applied to show that, if an equilibrium satisfying \eqref{eq:maximumpowerconsumption} exists, it is stable. Indeed, we see in Figures~\ref{fig:ZIE_voltages} and \ref{fig:ZIE_currents} that PC voltages converge towards an equilibrium point in positive orthant of $\mathbb{R}^N$, which results in current sharing. 

\textbf{\textit{Change of ZIE loads:}} At $t=17s$, exponents of the ZIE loads attached to DGUs $3$ and $6$ are changed to values greater than one, i.e., $r_3 = 1.45$ and $r_6 = 1.35$. This change of loads, in turn, change the dynamics of the DCmG, thus leading to a change of operation point. As can be seen in Figure~\ref{fig:ZIE}, the primary and secondary control objectives are satisfied under this load change.

\textbf{\textit{Unplugging of DGU $5$:}} At $t=22s$, in order to show that the proposed controller works under unplugging of DGUs from the DCmG, the DGU $5$ is isolated by opening the switches of lines $l_6$ and $l_7$. In doing so, its consensus-based controller is disabled, and those of its former neighbors, DGUs $4$ and $6$, are modified. Figure~\ref{fig:ZIE} shows that DGUs $1$, $2$, $3$, $4$, and $6$ achieve current sharing and voltage balancing, whereas DGU $5$ supplies its own load after unplugging from the DCmG. 


\subsection{Nonexistence of equilibria}
\label{subsec:SimulationZIPFailure}

Take into consideration the DCmG in Figure~\ref{fig:mG_simulation}, where all DGUs have ZIP loads. In this scenario, we modify only the line resistances in the DCmG to show that a solution to \eqref{eq:PFmatrix} may fail to exist if necessary conditions are not satisfied. In particular, we increase the line resistances such that the condition in \eqref{eq:Criticalpower} is no longer satisfied. Consequently, Theorem~\ref{thm:existence} can not be applied, meaning that the current sharing may not be achieved. 

\begin{figure}[t!]
	\centering
	\definecolor{mycolor1}{rgb}{0.00000,0.44700,0.74100}%
	\definecolor{mycolor2}{rgb}{0.85000,0.32500,0.09800}%
	\definecolor{mycolor3}{rgb}{0.92900,0.69400,0.12500}%
	\definecolor{mycolor4}{rgb}{0.49400,0.18400,0.55600}%
	\definecolor{mycolor5}{rgb}{0.46600,0.67400,0.18800}%
	\definecolor{mycolor6}{rgb}{0.30100,0.74500,0.93300}%
	\begin{tikzpicture} 
    \begin{axis}[%
    hide axis,
    xmin=10,
    xmax=50,
    ymin=0,
    ymax=0.4,
    legend style={legend columns=6, legend cell align=left, align=left, draw=white!15!black}
    ]
    \addlegendimage{mycolor1,line width = 1.5}
    \addlegendentry{\small{ $V_1$}};
    \addlegendimage{mycolor2,line width = 1.5}
    \addlegendentry{\small{ $V_2$}};
    \addlegendimage{mycolor3,line width = 1.5}
     \addlegendentry{\small{ $V_3$}};
     \addlegendimage{mycolor4,line width = 1.5}
    \addlegendentry{\small{ $V_4$}};
    \addlegendimage{mycolor5,line width = 1.5}
    \addlegendentry{\small{ $V_5$}};
    \addlegendimage{mycolor6,line width = 1.5}
     \addlegendentry{\small{ $V_6$}};
    \end{axis}
\end{tikzpicture}
		\includegraphics{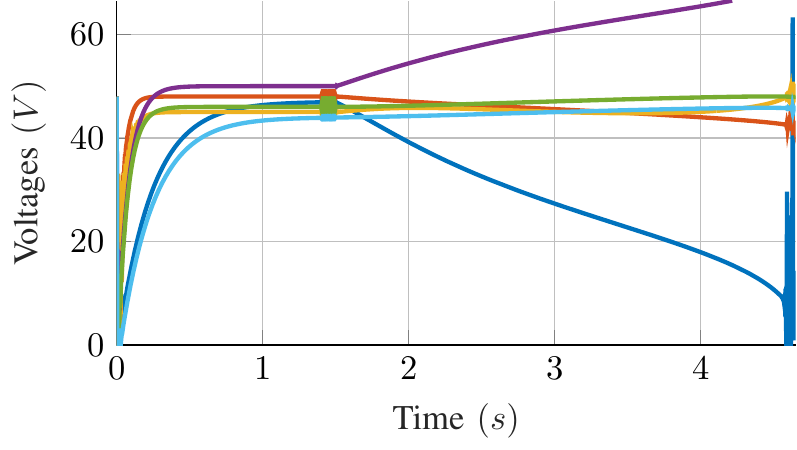}
	\caption{PC voltages with ZIP loads, when \eqref{eq:Criticalpower} is not satisfied.}
	\label{fig:Inex_voltages}
\end{figure}

To present this phenomenon through simulation, we initialize the DCmG with all the physical lines and communication channels disconnected, as in the first scenario above. Then, all electrical lines and communication channels are attached to connect all $6$ DGUs together at $t=1.5s$, also activating the secondary controllers. As can be seen in Figure~\ref{fig:Inex_voltages}, an equilibrium point does not exist. This results in a \textit{voltage collapse} in a short period of time, i.e., one of the PC voltages fall down to $0V$, which indicates an unsafe operating point where many electrical devices would either shut down or get damaged.

\section{Conclusions}
\label{sec:conclusions}
In this paper, a novel secondary consensus-based control layer for current sharing and voltage balancing in DCmGs was presented. We considered a DCmG composed of realistic DGUs, RLC lines, and ZIE loads. A rigorous steady-state analysis was conducted, and appropriate conditions ensuring the attainment of both objectives were derived. In addition,we provided a voltage stability analysis showing that the controllers can be synthesized in a decentralized fashion. Future work will study the impact of communication network non-idealities (such as transmission delays, data quantization and packet drops) on the performance of closed-loop mGs. Further developments can also consider the inclusion of Boost and other DC-DC converters.
\bibliographystyle{IEEEtran}
\bibliography{IEEEabrv,articleref}
\appendix
\begin{lemma}
	\label{lem:rangeLE}
	The range space of $\LL_p$ is $H^1$.
\end{lemma}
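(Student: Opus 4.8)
The plan is to reduce the statement to a rank count, after first rewriting $\LL_p=\mathcal{L}_e+\mathcal{L}_t[I^s_t]^{-1}Y_L$ in a form that exposes its structure. The chief difficulty is that $\LL_p$ is \emph{not} symmetric: it is the sum of the symmetric Laplacian $\mathcal{L}_e$ and the product $\mathcal{L}_t[I^s_t]^{-1}Y_L$ of symmetric factors, so the range cannot be read off from the kernel via symmetry, and one must work a little. I would circumvent this by turning $\LL_p$ into a rank-one perturbation of a symmetric positive definite matrix.

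First I would simplify the second summand. Writing $\sigma=\textbf{1}_N^T[I^s_t]\textbf{1}_N$, $s=[I^s_t]\textbf{1}_N$, and $y=Y_L\textbf{1}_N$, and cancelling $[I^s_t]$ against its inverse, a short computation from the definition $\mathcal{L}_t=[I^s_t]-\sigma^{-1}[I^s_t]\textbf{1}_N\textbf{1}_N^T[I^s_t]$ gives $\mathcal{L}_t[I^s_t]^{-1}Y_L=Y_L-\sigma^{-1}s\,y^T$, whence
\begin{equation*}
\LL_p=\underbrace{(\mathcal{L}_e+Y_L)}_{=:A}-\tfrac{1}{\sigma}\,s\,y^T .
\end{equation*}
Since $\mathcal{L}_e=BR^{-1}B^T\succeq 0$ and $Y_L\succ 0$, the matrix $A$ is symmetric positive definite, hence invertible; moreover $s$ and $y$ are strictly positive vectors, so the correction $\sigma^{-1}sy^T$ is genuinely rank one.

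Next I would establish $\RR(\LL_p)\subseteq H^1$ by checking $\textbf{1}_N^T\LL_p=\textbf{0}_N^T$. This uses $\textbf{1}_N^TB=\textbf{0}_M^T$ (so $\textbf{1}_N^T\mathcal{L}_e=\textbf{0}_N^T$) together with the cancellation $\textbf{1}_N^TY_L-\sigma^{-1}(\textbf{1}_N^Ts)\,y^T=y^T-y^T=\textbf{0}_N^T$, where the last step follows from $\textbf{1}_N^Ts=\sigma$ and $\textbf{1}_N^TY_L=y^T$. Consequently every column of $\LL_p$ is orthogonal to $\textbf{1}_N$ and thus lies in $H^1$.

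Finally I would pin down the dimension by computing $\NN(\LL_p)$. From $\LL_p v=0$ one obtains $Av=\sigma^{-1}(y^Tv)\,s$; if $y^Tv=0$ then $Av=0$ forces $v=0$ by invertibility of $A$, and otherwise $v\in\mathrm{span}(A^{-1}s)$, so $\dim\NN(\LL_p)\le 1$. On the other hand $\textbf{1}_N^T\LL_p=\textbf{0}_N^T$ makes $\LL_p^T$, and hence $\LL_p$, singular, so $\dim\NN(\LL_p)\ge 1$. Therefore $\rank(\LL_p)=N-1$, giving $\dim\RR(\LL_p)=N-1=\dim H^1$; combined with $\RR(\LL_p)\subseteq H^1$ this yields $\RR(\LL_p)=H^1$. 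I expect the rewriting $\mathcal{L}_t[I^s_t]^{-1}Y_L=Y_L-\sigma^{-1}sy^T$ to be the crux, since it is what converts the opaque nonsymmetric product into a tractable rank-one update of the invertible matrix $\mathcal{L}_e+Y_L$.
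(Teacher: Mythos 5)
Your proof is correct, but it takes a genuinely different route from the paper's. The paper also begins by checking $\mathbf{1}_N^T\LL_p=\mathbf{0}_N^T$ to get $\RR(\LL_p)\subseteq H^1$, but it then obtains $\rank(\LL_p)=N-1$ by asserting that $\LL_p^T=\LL_e+Y_L[I^s_t]^{-1}\LL_t$ is itself the Laplacian of a connected graph (it has zero row sums since $\LL_t\mathbf{1}_N=\mathbf{0}_N$, and strictly negative off-diagonal entries inherited from the complete-graph Laplacian $\LL_t$ scaled by the positive diagonal $Y_L[I^s_t]^{-1}$), and citing the standard rank result for connected-graph Laplacians. You instead expose $\LL_p$ as the rank-one perturbation $A-\sigma^{-1}sy^T$ of the symmetric positive definite $A=\LL_e+Y_L$, bound the nullity above by $1$ via invertibility of $A$, and below by $1$ via the left null vector $\mathbf{1}_N$. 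The paper's argument is shorter but leaves the verification that $\LL_p^T$ really is a connected-graph Laplacian (sign pattern and connectivity of the induced weighted graph) implicit; yours is purely linear-algebraic, self-contained, and makes the structural reason for $\rank(\LL_p)=N-1$ explicit, at the cost of the extra computation $\mathcal{L}_t[I^s_t]^{-1}Y_L=Y_L-\sigma^{-1}sy^T$. Both arguments rely on $Y_L$ having strictly positive diagonal entries, which is consistent with the paper's standing assumption $Y_L\in\mathbb{R}^{N\times N}_{>0}$.
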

\begin{proof}
	Given that both $\LL_e$ and $\LL_t$ are symmetric Laplacian matrices of a connected graph, $\textbf{1}_N^T\LL_p=0$. Hence, $\RR(\LL_p)\subseteq H^1$. Note that the matrix $\LL_p^T$ is the Laplacian of a connected graph \cite{Bullo}, and therefore, $dim(\RR(\LL_p^T))=N-1$. Since $H^1$ is an $N-1$ dimensional subspace of $\mathbb{R}^N$ and $dim(\RR(\LL_p))=dim(\RR(\LL_p^T))=N-1$, $\RR(\LL_p) =H^1$.
\end{proof}
\subsection{LaSalle Analaysis}
\label{appendixA}
On invoking LaSalle's invariance principle, one has that, if $\tilde{X}(0)\in {\MM}$, then the state $\tilde{X}(t)$ asymptotically converges to the largest invariant set in \begin{equation}
\label{eq:SETE}
\begin{split}
{E}=\left\{\right. \tilde{X} \in {\MM}:	\dot{\VV}{(\tilde{X})}=0\left.\right\}.\\
\end{split}
\end{equation}
Now by \eqref{eq:lyapderivative}, $\dot{\VV}{(\tilde{X})}=0$ if and only if $\tilde{X}\in\NN(\QQ)$. By direct computation, the set $E$ can equivalently be represented in terms of the state $\tilde{X}$  as
\begin{equation}
\label{eq:setE}
E=\left\{\tilde{X}\in{\MM}~|~\tilde{X}=\begin{bmatrix}
p\\
[\gamma]q \\
-[\beta]q\\
\textbf{0}_M\\
s
\end{bmatrix}, q,s \in \mathbb{R}^{N} \right \},
\end{equation}
where $p\in \mathbb{R}^N$ when $Y_L+Y_E(V)=\textbf{0}$, otherwise $p=\textbf{0}_N$. For evaluating the largest invariant set in $E$, we pick the general case, that is, $p\in \mathbb{R}^N$.	In order to conclude the proof, we need to show that the largest invariant set $M\subseteq E\subseteq{\MM}$ is uniquely the equilibrium point $\bar{X}$. To find the largest invariant set, we aim to deduce conditions on $\tilde{X} \in E$ such that $\dot{\tilde{X}} \in E$. Using \eqref{eq:setE} and \eqref{eq:globalstatespace} we obtain
\begin{equation*}
\begin{split}
\dot{\tilde{X}}=\dot{X}&=\mathcal{A}\bar{X}+\mathcal{A}\begin{bmatrix}
p\\
[\gamma]q \\
-[\beta]q\\
\textbf{0}_M\\
s
\end{bmatrix}+\mathcal{B}(V)\\
&=\begin{bmatrix}
-C_t^{-1}(Y_{L}+Y_E(V))p+C_t^{-1}[\gamma]q\\
[\alpha]p+[\delta][I^s_t]^{-1}\mathcal{L}_cs\\
-p-[I^s_t]^{-1}\mathcal{L}_cs\\
L^{-1}B^Tp\\
\mathcal{L}_c[I^s_{t}]^{-1}[\gamma]q
\end{bmatrix}.
\end{split}
\end{equation*}
Therefore, $\dot{\tilde{X}} \in E$, if and only if $L^{-1}B^Tp=\textbf{0}_M$ and the following equations hold:
\begin{subequations}
	\begin{align}
	[\alpha]p+[\delta][I^s_t]^{-1}\mathcal{L}_cs&=[\gamma]\tilde{q}, \label{eq:E_inv_a}\\
	-p-[I^s_t]^{-1}\mathcal{L}_cs&=	-[\beta]\tilde{q}, \label{eq:E_inv_b}
	\end{align}
\end{subequations}
where $\tilde{q} \in \mathbb{R}^N$.Left multiplying \eqref{eq:E_inv_b} with $[\alpha]$, and then adding it with \eqref{eq:E_inv_a} yields
\begin{equation}
[\alpha][\beta]q=[\gamma]q.
\end{equation}
This necessitates 
\begin{equation}
\label{eq:condition}
\alpha_i\beta_i=\gamma_i ,\forall i \in \DD.
\end{equation}
Also, as the feedback gains $k_{1,i},k_{2,i},$ and $k_{3,i}$ belong to the set $\ZZ_{[i]}$ in \eqref{eq:explicitgains}, then $\alpha_i<0,\beta_i<0$, and $0<\gamma_i<\alpha_i\beta_i$. Thus, we conclude that \eqref{eq:E_inv_a} and \eqref{eq:E_inv_b} can be simultaneously satisfied only if $\tilde{q}=\textbf{0}_{N}$.  As for $L^{-1}B^Tp=\textbf{0}_M$, one obtains that  $a \in \NN(B^T)$. Since the graph $\GG$ is connected, $\NN(B^T)=H_{\perp}^1$ \cite{Bullo}. Therefore, for $\dot{\tilde{X}}$ to remain in $E$, $\tilde{X}$ must stay in set $S \subset E$, where 
\begin{equation}
S=\left\{\tilde{X}\in{\MM}~|~\tilde{X}=\begin{bmatrix}
\kappa\textbf{1}_N\\
\textbf{0}_N \\
\textbf{0}_N\\
\textbf{0}_M\\
s
\end{bmatrix}, s \in \mathbb{R}^{N} \right \}.
\end{equation}
Furthermore, it must hold $M \subseteq S$. Then, in order to characterize $M$, we assume $\tilde{X} \in S$ and impose $\dot{\tilde{X}} \in S$. This translates into the following
\begin{equation*}
\begin{split}\dot{\tilde{X}}=\dot{X}&=\mathcal{A}\bar{X}+\mathcal{A}\begin{bmatrix}
\kappa\textbf{1}_N\\
\textbf{0}_N \\
\textbf{0}_N\\
\textbf{0}_M\\
s
\end{bmatrix}+\mathcal{B}(V)\\&=\begin{bmatrix}
-{\kappa}C_t^{-1}(Y_{L}+Y_E(V))\textbf{1}_N\\
{\kappa}[\alpha]\textbf{1}_N+[\delta][I^s_t]^{-1}\mathcal{L}_cs\\
-{\kappa}\textbf{1}_N-[I^s_t]^{-1}\mathcal{L}_cs\\
\textbf{0}_M\\
\textbf{0}_N
\end{bmatrix}.\end{split}\end{equation*}
Notice that, for $\dot{\tilde{X}} \in S$, it must hold that $[I^s_t]^{-1}\mathcal{L}_cs=-{\kappa}\textbf{1}_N$. Since ${\kappa}\textbf{1}_N\in \NN([I^s_t]^{-1}\mathcal{L}_c)$, it turns out that both $\kappa=0$ and $s=\mbf{0}_N.$ This implies that the largest invariant set $M\subseteq E$ is $M= \{\tilde{X}\in{\MM}~|~\tilde{X}=\textbf{0}_{4N+M}\}$. 

\end{document}